\documentclass[english,10pt]{article}
\usepackage{geometry}
\geometry{verbose,tmargin=2cm,bmargin=2cm,lmargin=2cm,rmargin=2cm}
\usepackage{float}
\usepackage{mathrsfs,bm}
\usepackage{amsmath}
\usepackage{amssymb}
\usepackage{fancyhdr}
\usepackage{epsfig}
\usepackage{subfig}
\usepackage{amsthm}
\usepackage{mathrsfs}
\usepackage{amsfonts}
\usepackage{mathrsfs}
\usepackage{enumerate}

\makeatletter

\floatstyle{ruled}
\newfloat{algorithm}{tbp}{loa}
\providecommand{\algorithmname}{Algorithm}
\floatname{algorithm}{\protect\algorithmname}



\@ifundefined{definecolor}{\@ifundefined{definecolor}
 {\@ifundefined{definecolor}
 {\usepackage{color}}{}
}{}
}{}

\newtheorem{theorem}{Theorem}[section]
\newtheorem{lem}{Lemma}[section]
\newtheorem{rem}{Remark}[section]

\newcounter{hypA}
\newenvironment{hypA}{\refstepcounter{hypA}\begin{itemize}
  \item[({\bf A\arabic{hypA}})]}{\end{itemize}}

\usepackage{babel}


\begin{document}

\begin{center}

{\Large \textbf{Approximate Bayesian Computation for Smoothing}}

\vspace{0.5cm}

BY JAMES S. MARTIN$^{1}$, AJAY JASRA$^{2}$, SUMEETPAL S. SINGH$^{3}$, NICK WHITELEY$^{4}$ \& EMMA McCOY$^{5}$

{\footnotesize $^{1}$Australian School of Business,
University of New South Wales, Sydney, 2052, AUS.}\\
{\footnotesize E-Mail:\,}\texttt{\emph{\footnotesize james.martin04@ic.ac.uk}}\\
{\footnotesize $^{2}$Department of Statistics \& Applied Probability,
National University of Singapore, Singapore, 117546, SG.}\\
{\footnotesize E-Mail:\,}\texttt{\emph{\footnotesize staja@nus.edu.sg}}\\
{\footnotesize $^{3}$Department of Engineering,
University of Cambridge, Cambridge, CB2 1PZ, UK.}\\
{\footnotesize E-Mail:\,}\texttt{\emph{\footnotesize sss40@cam.ac.uk}}\\
{\footnotesize $^{4}$Department of Mathematics,
University of Bristol, Bristol, BS8 1TW, UK.}\\
{\footnotesize E-Mail:\,}\texttt{\emph{\footnotesize nick.whiteley@bristol.ac.uk}}\\
{\footnotesize $^{5}$Department of Mathematics,
Imperial College London, London, SW7 2AZ, UK.}\\
{\footnotesize E-Mail:\,}\texttt{\emph{\footnotesize e.mccoy@ic.ac.uk}}
\end{center}

\begin{abstract}

We consider a method for approximate inference in hidden Markov models (HMMs). The method circumvents the need to evaluate conditional densities of observations given the hidden states. It may be considered an instance of Approximate Bayesian Computation (ABC) and it involves the introduction of auxiliary
variables valued in the same space as the observations. The quality of the approximation may be controlled to arbitrary precision through a parameter $\epsilon>0$. We provide theoretical results which quantify, in terms of $\epsilon$, the ABC error in approximation of expectations of additive functionals with respect to the smoothing distributions. Under regularity assumptions, this error is $\mathcal{O}(n\epsilon)$, where $n$ is the number of time steps over which smoothing is performed. For numerical implementation we adopt the forward-only sequential Monte Carlo (SMC) scheme of \cite{dds2} and quantify the combined error from the ABC and SMC approximations. This forms some of the first quantitative results for ABC methods which jointly treat the ABC  and simulation errors, with a finite number of data and simulated samples.
When the HMM has unknown static parameters, we consider particle Markov chain Monte Carlo \cite{andrieu} (PMCMC) methods for batch statistical inference.\\

\textbf{Key-words}: Smoothing, Hidden Markov Models, Approximate Bayesian Computation, Sequential Monte Carlo, Markov chain Monte Carlo.\\
\end{abstract}

\section{Introduction}\label{sec:intro}

Hidden Markov Models are widely used in statistics; see \cite{cappe} for a recent overview. An HMM\ is a pair of discrete-time stochastic
processes, $\left\{  X_{n}\right\}  _{n\mathbb{\geq}0}$ and $\left\{
Y_{n}\right\}  _{n\geq 0}$, where $X_{n}\in\mathbb{R}^{d_{x}}$ is unobserved
and $Y_{n}\in\mathbb{R}^{d_{y}}$ is observed. The hidden process
$\left\{  X_{n}\right\}_{n\geq 0} $ is a\ Markov chain with initial density $\eta_0 $ at time $0$ and transition density $f\left(x_{n-1},x_{n}\right)
$, i.e.%
\begin{equation}
\mathbb{P}(X_{0}\in A)=\int_A \eta_0(x)dx \quad\text{ and }%
\quad\mathbb{P}(X_{n}\in A|X_{n-1}=x_{n-1})=\int_{A}f(x_{n-1},x_n
)dx_{n}\quad n\geq1 \label{eq:evol}%
\end{equation}
where $A\subseteq\mathbb{R}^{d_{x}}$ and
$dx_{n}$ is a dominating $\sigma-$finite measure.
Each observation
$Y_{n}$ is conditionally independent of other variables given $\left\{  X_{n}\right\}_{n\geq 0}$ and its conditional distribution is specified by a density
$g\left(x_{n},y_{n}\right)  $, i.e.%
\begin{equation}
\mathbb{P}(Y_{n}\in B|\{X_{k}\}_{k\geq0}=\{x_{k}\}_{k\geq 0})=\int_{B}%
g(x_{n},y_{n})dy_{n}\quad n\geq 0  \label{eq:obs}%
\end{equation}
with $B\subseteq\mathbb{R}^{d_{y}}$ and $dy_{n}$ is a dominating $\sigma-$finite measure. We remark that $\eta_0$, $f(x_{n-1},x_n)$ and $g(x_n,y_n)$ may depend upon
time-independent parameters, which we term static parameters.

A variety of inference and estimation tasks for HMMs involve the computation of the smoothing functional $\mathcal{V}_n:\mathbb{R}^{(n+1)d_x}\rightarrow\mathbb{R}$
\begin{equation}
\mathbb{E}[\mathcal{V}_n(X_{0:n})|y_{0:n}] \label{eq:exp_interest}
\end{equation}
where $\mathcal{V}_n(x_{0:n})=\sum_{p=0}^n v_p(x_{p-1:p})$, $v_p:\mathbb{R}^{2d_x}\rightarrow\mathbb{R}$, $1\leq p \leq n$,
$x_{-1:0}=x_0$ with $v_0:\mathbb{R}^{d_x}\rightarrow\mathbb{R}$ and the expectation is w.r.t.~the joint smoothing distribution. For example in the cases $v_p(x_p)=x_p/(n+1)$ and $v_p(x_{p-1:p})=x_{p-1}x_p/(n+1)$,  \eqref{eq:exp_interest} approximates the posterior mean and first-order auto-covariance. When the HMM includes unknown static parameters, expectations of additive functionals play a central role in expectation-maximisation (EM) algorithms and the calculation of score vectors; see \cite{cappe,dds2} for some discussion. 

In practice, the expectation in \eqref{eq:exp_interest} can rarely be computed exactly and one resorts to the use of numerical integration techniques such as SMC; see \cite{doucet} for a recent overview. These methods typically rely on the ability to evaluate pointwise the conditional density $g(x,y)$. The methods we consider address the problem of obtaining an approximation of \eqref{eq:exp_interest} without performing any such evaluations, in a principled manner admitting control of the error in approximation. The motivations for avoiding such evaluations are as follows. Firstly, for some models, $g(x,y)$  may simply not have a closed form expression. Secondly, in some situations evaluation of $g(x,y)$ may be very expensive. 

The general technique we consider may be interpreted as an instance of ABC. A recent review of this class of methods can be found in \cite{robert}. In the context of HMMs, ABC has been considered by \cite{dean,jasra}, see also \cite{cornebise}. The approximation (given in Section \ref{sec:abc_approx}) has been introduced by \cite{mckinley} and still requires numerical (e.g.~SMC) methods 
to fit them.
Alternative ideas include nonparametric filtering \cite{gauchi} and the related convolution particle filter \cite{campilo}; see \cite{jasra} for some discussion relative to ABC. As noted by \cite{robert}, in the scenario where there is a fixed amount of data and a fixed number of simulated samples, there is a distinct lack of theoretical results which quantify the combined ABC and numerical (SMC) errors; we provide some of the first results in this context (see \cite{calvet,jasra} for other results). 

Our main objectives are to 
\begin{enumerate}
        \item investigate, both theoretically and empirically, the error associated with the approximation scheme we propose
        \item demonstrate how this scheme can be used to perform smoothing and to estimate static parameters in HMMs from a batch of data
\end{enumerate}

Regarding point 1., the error has two components. The first component arises from the introduction of an auxiliary HMM, incorporating auxiliary variables valued in the same space as the observations $\{y_n\}_{n\geq 0}$. Smoothing expectations under this auxiliary model, which we write as $\mathbb{E}^\epsilon[\mathcal{V}_n(X_{0:n})|y_{0:n}]$, may be taken as approximations of \eqref{eq:exp_interest}; the degree of approximation is controlled through a parameter $\epsilon$ and the error should disappear as $\epsilon\rightarrow 0$. In turn, expectations under this auxiliary model admit efficient numerical approximation using SMC techniques, without evaluation of $g(x,y)$. The second component of the overall error arises from this Monte Carlo scheme, and is controlled through a sample size parameter $N$; the error disappears as $N \rightarrow +\infty$. 
It is noted that the SMC method adopted is the forward only smoothing implementation of the forward-filtering backward smoothing (FFBS) method \cite{andrieu1,godsill} in \cite{dds1}; this is currently one of the most accurate methods for SMC approximation of smoothed additive functionals. 

We will write the SMC estimate of the smoothing expectation under the auxiliary HMM as $\Xi_n[\epsilon,N,\mathcal{V}_n,y_{0:n}]$ and, similarly, we may denote the SMC estimate of the smoothing expectation under the original HMM \eqref{eq:evol}-\eqref{eq:obs} as $\Xi_n[N,\mathcal{V}_n,y_{0:n}]$. In the numerical studies in Section \ref{sec:numerics}, the errors associated with these SMC estimates will be denoted as $e^{N,\epsilon}_n$ and $e^N_n$, respectively.

The overall error associated with the SMC estimate of the smoothing expectation under the auxiliary HMM may be decomposed as  
\begin{align}
\mathbb{E}[\mathcal{V}_n(X_{0:n})|y_{0:n}]-\Xi_n[\epsilon,N,\mathcal{V}_n,y_{0:n}]=&\mathbb{E}[\mathcal{V}_n(X_{0:n})|y_{0:n}]-\mathbb{E}^{\epsilon}[\mathcal{V}_n(X_{0:n})|y_{0:n}]\nonumber\\
+&\mathbb{E}^{\epsilon}[\mathcal{V}_n(X_{0:n})|y_{0:n}]-\Xi_n[\epsilon,N,\mathcal{V}_n,y_{0:n}].\label{intro:err:decomp}
\end{align}
The first difference on the right of \eqref{intro:err:decomp} is a deterministic error, the second difference is a stochastic error. We will provide theoretical analysis of these two error terms which shows how the interplay between $\epsilon$, $n$ and $N$ controls the overall quality of the approximation. These theoretical results, to an extent, are also studied from an empirical perspective.

%
%

Regarding point 2., we show how the approximation scheme can be incorporated into a particle MCMC scheme in order to estimate static parameters of the HMM. 
Particle MCMC uses SMC techniques to generate proposals, for example, associated to hidden
states of the HMM.
Here, we use the particle marginal Metropolis-Hastings algorithm in \cite{andrieu} to sample from the ABC approximation of the HMM, with a prior placed upon the unknown static-parameters. In contexts where additive
functions of the hidden state are also of interest (as above), we use the forward-only smoothing
technique mentioned above, to use all the simulated samples from the SMC proposal (which
is not typically adopted). A similar idea has been adopted by \cite{olsson}, except using an additional `backward-pass' in the FFBS algorithm, which is not needed.

In summary, our main contributions are to:

\begin{itemize}
        \item quantify, in terms of $\epsilon$ and  $n$, the error  $\mathbb{E}[\mathcal{V}_n(X_{0:n})|y_{0:n}]-\mathbb{E}^{\epsilon}[\mathcal{V}_n(X_{0:n})|y_{0:n}]$ - henceforth referred to as the \emph{ABC error}
        \item quantify, in terms of $\epsilon$, $N$ and $n$, the  error $\mathbb{E}^{\epsilon}[\mathcal{V}_n(X_{0:n})|y_{0:n}]-\Xi_n[\epsilon,N,\mathcal{V}_n,y_{0:n}]$  - henceforth referred to as the \emph{SMC error} 
        \item provide empirical evidence which illustrates some of these theoretical findings 
\end{itemize}

This paper is structured as follows. In Section \ref{sec:approach} we discuss the ABC approximation and we characterize the ABC error. In Section \ref{sec:simulation}, SMC and MCMC simulation techniques for targeting the (ABC) smoother are detailed, and the estimation of static parameters in a Bayesian manner is also considered.
In Section \ref{sec:theory} our main theoretical result is given, which combines the ABC and SMC errors discussed above. In Section \ref{sec:numerics} some numerical studies are presented. In Section \ref{sec:summary} the article is concluded. The proofs are in the appendices.

\subsection{Notations}
Given a measurable space $(E,\mathscr{E})$, let $\mu$ be a $\sigma-$finite measure, $K$ be a non-negative kernel and $f:E\rightarrow\mathbb{R}$ a measurable function. The conventions $\mu(f):=\int_{E}f(x)\mu(dx)$, $K(f)(x):=\int_E K(x,dy)f(y)$, $\mu K(f) := \mu(K(f))$ are used. In addition, let  $\textrm{Osc}(f):=\sup_{(x,y) \in E^2}|f(x)-f(y)|$ and let $\mathcal{B}_b(E)$ be
the Banach space of bounded and measurable functions on $E$ endowed with the norm $\|f\|:=\sup_{x\in E}|f(x)|$. For two probability measures $\mu_1,\mu_2$ the total variation distance is $\|\mu_1-\mu_2\|_{TV}=\sup_{A\in\mathscr{E}}|\mu_1(A)-\mu_2(A)|$.
For a Markov kernel $K$ the Dobrushin coefficient is $\beta(K):=\sup_{(x,y)\in E^2}
\|K(x,\cdot)-K(y,\cdot)\|_{TV}$. Given a probability space $(\Omega,\mathscr{F},\mathbb{P})$, we write  $\|\cdot\|_p=\mathbb{E}[|\cdot|^p]^{1/p}$ for the $\mathbb{L}_p$ norm under $\mathbb{P}$. For a set $A\in\mathscr{E}$, $\mathbb{I}_A(x)$ is the indicator function.

\section{ABC Approximation}\label{sec:approach}

\subsection{ABC Smoothing Approximation}\label{sec:abc_approx}
The joint smoothing density is
$$
\hat{\eta}_{n}(x_{0:n}) = \frac{\prod_{i=0}^n g(x_i,y_i) \eta_0(x_0)\prod_{i=1}^n f(x_{i-1},x_i)}{\int_{\mathbb{R}^{(n+1)d_x}}\prod_{i=0}^n g(x_i,y_i) \eta_0(x_0)\prod_{i=1}^n f(x_{i-1},x_i) dx_{0:n}}
$$
where we will suppress the dependence on the data on the l.h.s.
In most scenarios of practical interest, one cannot calculate this density pointwise, or compute expectations w.r.t.~the density.
As a result, a numerical approximation of \eqref{eq:exp_interest}, via advanced computational tools, is required.
This problem is further exacerbated when the density $g(x,y)$ is intractable or very expensive to calculate. That is, one cannot evaluate it pointwise and there is no unbiased estimate available. However, we will assume throughout that one can sample from the associated distribution, for any $x\in\mathbb{R}^{d_x}$. 
It is remarked that this latter condition is not completely necessary for any of the subsequent ideas that will appear (see \cite{ehrlich,yidrlim}), but, it will facilitate a more compact exposition.

To introduce ideas, let us momentarily step away from the setting of HMM's; suppose, one is given observations $y\in\mathcal{D}$
associated to some intractable likelihood $g_{\theta}(y)$ with $\theta\in\Theta$ an unknown parameter. Then, Bayesian inference associated to the posterior $\pi(\theta|y)\propto g_{\theta}(y)\pi(\theta)$ is typically not feasible even using advanced computational tools; see \cite{robert}.
To deal with this issue, ABC draws inference from the following
modified posterior density on $\Theta\mathcal{\times D}$ 
\begin{equation}
\pi_{\epsilon}\left(\theta,u|y\right)=\frac{\pi(\theta)g_{\theta}(u)\mathbb{I}_{A_{\epsilon,y}}(u)}
{\int_{A_{\epsilon,y}\times\Theta}\pi(\theta)g_{\theta}(u)du d\theta}\label{eq:targetABC}
\end{equation}
 with $\epsilon>0$ a tolerance level and $u\in\mathcal{D}$
corresponds to some pseudo-observations. The set $A_{\epsilon,y}$
 is defined as follows
\[
A_{\epsilon,y}=\{z\in\mathcal{D}:\rho(s\left(z\right),s\left(y\right))<\epsilon\}
\]
 where $s:\mathcal{D\rightarrow S}$ represents some summary statistics
and $\rho:\mathcal{S\times S\rightarrow}\mathbb{R}^{+}$ a distance
metric.

As noted by \cite{chopin1,jasra}, given an appropriate structure for the likelihood (such as i.i.d.~data) one can often achieve a more accurate approximation by removing the summary statistics and focusing upon the probabilistic structure of the likelihood.
 Returning now to our setting of the HMM specified in section 1, following \cite{jasra,mckinley} we consider the ABC approximation
of the joint smoothing density, for $\epsilon>0$:
\begin{equation}
\hat{\eta}_{n,\epsilon}(x_{0:n}) = \frac{[\prod_{i=0}^n \int_{\mathbb{R}^{d_y}}\phi(\frac{u-y_i}{\epsilon})g(x_i,u)du] \eta_0(x_0)\prod_{i=1}^n f(x_{i-1},x_i)}{\int_{\mathbb{R}^{(n+1)d_x}}[\prod_{i=0}^n \int_{\mathbb{R}^{d_y}}\phi(\frac{u-y_i}{\epsilon})g(x_i,u)du] \eta_0(x_0)\prod_{i=1}^n f(x_{i-1},x_i) dx_{0:n}}
\label{eq:abc_smooth}
\end{equation}
where $\phi\big(\frac{u-y}{\epsilon}\big)$ is a potential which can take the form $\mathbb{I}_{\{u:|(u-y)/\epsilon|<1\}}$ or a probability density function. This particular ABC approximation maintains the Markovian structure of the model, which will help to facilitate computational algorithms. In particular, in order to approximate
$
\mathbb{E}^{\epsilon}[\mathcal{V}_n(X_{0:n})|y_{0:n}]
$
where $\mathbb{E}^{\epsilon}[\cdot|y_{0:n}]$ is an expectation w.r.t.~the ABC smoothing distribution, one will still need to resort to numerical methods such as SMC and MCMC.

From the perspective of a theoretical justification, results on the consistency in estimation (from both classical and Bayesian perspectives) of static parameters, associated to the ABC approximation,
as $n$ grows can be found in \cite{dean,dean1}. There is an intrinsic asymptotic bias, but this bias can be removed by using a noisy version of ABC; see \cite{dean,dean1} for further details.
In this article we do not address the idea of noisy ABC.

\subsection{ABC Error}

In order to ascertain the potential of using an ABC approximation of HMMs, when considering smoothed additive functionals, we investigate the ABC error. The analysis that follows will concentrate on the scenario in which the ABC kernel is \emph{not} the indicator function. In particular, this allows the inclusion of (A\ref{hyp:A}) (\ref{hyp:smooth_abc}), below, which will facilitate the theoretical analysis of the SMC error of the ABC smoother, in Section \ref{sec:theory}. We remark that the below result for the ABC bias can also be established when $\phi\big(\frac{u-y}{\epsilon}\big)$ is an indicator function with some minor modifications to the proof.

In the subsequent analysis we adopt the following assumption.
\begin{hypA} 
\label{hyp:A}
\renewcommand{\labelitemii}{}
\begin{enumerate}
\item{\label{hyp:density_control}There exists a $1<\rho<\infty$ such that for each $x,x'\in\mathbb{R}^{d_x}$ and $y\in\mathbb{R}^{d_y}$
\begin{eqnarray*}
\rho^{-1} & \leq\quad g(x,y) & \leq\quad \rho\\
\rho^{-1} & \leq\quad f(x,x') & \leq\quad \rho.
\end{eqnarray*}
}
\item{There exists a $L<\infty$ such that for every $y,y'\in\mathbb{R}^{d_y}$
$$
\sup_{x\in\mathbb{R}^{d_x}}|g(x,y)-g(x,y')| \leq L |y-y'|
$$
with $|\cdot|$ the $L_1-$norm.
}
\item{\label{hyp:smooth_abc} There exist functions $\overline{\alpha},\underline{\alpha}:\mathbb{R}_+\rightarrow\mathbb{R}_+$
such that for any $y,u\in\mathbb{R}^{d_y}$, $\epsilon\in\mathbb{R}_+$
$$
\underline{\alpha}(\epsilon) \leq \phi\bigg(\frac{y-u}{\epsilon}\bigg) \leq \overline{\alpha}(\epsilon)
$$
with $\delta(\epsilon):=\overline{\alpha}(\epsilon)/\underline{\alpha}(\epsilon)$ monotonically decreasing. In addition $\int_{\mathbb{R}^{d_y}}\phi\big(\frac{y-u}{\epsilon}\big)dy=1$
with $\int_{\mathbb{R}^{d_y}}|y|\phi(y)dy <+\infty$.}
\item{\label{hyp:function} The analysis is associated to additive functionals $V_n:\mathbb{R}^{(n+1)d_x}\rightarrow\mathbb{R}$:
$$
\mathcal{V}_n(x_{0:n}) = \sum_{p=0}^n v_p(x_p).
$$
with $\overline{v} = \sup_{0\leq p <\infty}\|v_p\|<+\infty$.}
\end{enumerate}
\end{hypA}

The assumptions are rather strong and will typically only hold when the observations and hidden states lie on a compact state spaces, they are however, quite typically of assumptions employed in the analysis of SMC and related approximation methods. Of these assumptions, perhaps (A\ref{hyp:A}-\ref{hyp:smooth_abc}) should be discussed. It can be verified when:
$$
 \phi\bigg(\frac{y-u}{\epsilon}\bigg) \propto \exp\bigg\{-\bigg(\frac{y-u}{\epsilon}\bigg)\bigg\}\quad y\in(a,b).
$$
It is remarked that in practice, one selects $ \phi\big(\frac{y-u}{\epsilon}\big)$, so this
is not such a demanding assumption.

\subsubsection{Result}

We have the following result, whose proof is in Appendix \ref{app:abc_error}.

\begin{theorem}\label{theo:abc_error}
Assume (A\ref{hyp:A}). Then there exist a $C<+\infty$ such that for any $\epsilon>0$, $n\geq 1$, $y_{0:n}$,
$$
|\mathbb{E}[\mathcal{V}_n(X_{0:n})|y_{0:n}]-\mathbb{E}^{\epsilon}[\mathcal{V}_n(X_{0:n})|y_{0:n}]| \leq C \epsilon (n+1)
$$
where $\mathbb{E}[\cdot|y_{0:n}]$ and $\mathbb{E}^{\epsilon}[\cdot|y_{0:n}]$ are the expectation w.r.t.~the joint smoothing and ABC smoothing distribution.
\end{theorem}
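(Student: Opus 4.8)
The plan is to exhibit $\hat\eta_{n,\epsilon}$ as the joint smoothing distribution of an auxiliary HMM with the same $\eta_0$ and $f$ but with observation density $g$ replaced by the mollified density $g_\epsilon(x,y):=\int_{\mathbb{R}^{d_y}}\phi\big(\tfrac{u-y}{\epsilon}\big)g(x,u)\,du$, and then to control the two smoothers by the classical stability/forgetting machinery for Feynman--Kac flows. First I would record two elementary consequences of (A\ref{hyp:A}): since $\int\phi\big(\tfrac{u-y}{\epsilon}\big)du=1$ by (A\ref{hyp:A}-\ref{hyp:smooth_abc}) and $\rho^{-1}\le g\le\rho$ by (A\ref{hyp:A}-\ref{hyp:density_control}), the mollified model again satisfies the strong-mixing bound $\rho^{-1}\le g_\epsilon\le\rho$; and writing $g_\epsilon(x,y)-g(x,y)=\int\phi\big(\tfrac{u-y}{\epsilon}\big)\big(g(x,u)-g(x,y)\big)du$, the $L_1$-Lipschitz assumption on $g(x,\cdot)$ together with $\int|y|\phi(y)\,dy<\infty$ from (A\ref{hyp:A}-\ref{hyp:smooth_abc}) gives, after a change of variables, $\sup_{x}|g_\epsilon(x,y)-g(x,y)|\le C_1\epsilon$ with $C_1$ independent of $y$ and $\epsilon$.

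Using (A\ref{hyp:A}-\ref{hyp:function}) and $\|v_p\|\le\overline v$, I then reduce the claim to a per-marginal bound: writing $\hat\eta_{n,p}$, $\hat\eta_{n,p}^\epsilon$ for the time-$p$ marginals of the two joint smoothers,
$$
\big|\mathbb{E}[\mathcal{V}_n(X_{0:n})|y_{0:n}]-\mathbb{E}^{\epsilon}[\mathcal{V}_n(X_{0:n})|y_{0:n}]\big|\le\sum_{p=0}^{n}\big|\hat\eta_{n,p}(v_p)-\hat\eta_{n,p}^{\epsilon}(v_p)\big|\le 2\overline v\sum_{p=0}^{n}\|\hat\eta_{n,p}-\hat\eta_{n,p}^{\epsilon}\|_{TV},
$$
so it suffices to prove $\|\hat\eta_{n,p}-\hat\eta_{n,p}^{\epsilon}\|_{TV}\le C_2\epsilon$ with $C_2$ independent of $n$ and $p$. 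To get this I would interpolate one observation potential at a time: let $\hat\eta_{n,p}^{(k)}$ be the time-$p$ marginal of the smoother whose observation potential is $g_\epsilon(\cdot,y_i)$ for $i\le k$ and $g(\cdot,y_i)$ for $i>k$, so $\hat\eta_{n,p}^{(-1)}=\hat\eta_{n,p}$, $\hat\eta_{n,p}^{(n)}=\hat\eta_{n,p}^{\epsilon}$, and telescope $\|\hat\eta_{n,p}-\hat\eta_{n,p}^{\epsilon}\|_{TV}\le\sum_{k=0}^{n}\|\hat\eta_{n,p}^{(k-1)}-\hat\eta_{n,p}^{(k)}\|_{TV}$. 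Consecutive terms differ only through the single potential at time $k$, whose relative sup-norm perturbation is at most $\rho C_1\epsilon$ since $g\wedge g_\epsilon\ge\rho^{-1}$; the key estimate is then
$$
\|\hat\eta_{n,p}^{(k-1)}-\hat\eta_{n,p}^{(k)}\|_{TV}\le C_3\,\epsilon\,\gamma^{|k-p|},\qquad\gamma=\gamma(\rho)\in(0,1),
$$
with $C_3,\gamma$ depending only on $\rho$. Summing the geometric series gives $\|\hat\eta_{n,p}-\hat\eta_{n,p}^{\epsilon}\|_{TV}\le C_3\epsilon\cdot 2/(1-\gamma)=:C_2\epsilon$ uniformly in $n,p$, whence $2\overline v\sum_{p=0}^{n}C_2\epsilon=C\epsilon(n+1)$; note that it is the geometric decay in $|k-p|$, rather than a mere uniform $O(\epsilon)$ bound on each telescoping term, that prevents the error degrading to $O(n^2\epsilon)$.

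The hard part is the geometric single-potential bound, which I would obtain from the forward--backward representation $\hat\eta_{n,p}(dx_p)\propto\alpha_p(dx_p)\,\beta_{p:n}(x_p)$, where $\alpha_p$ is the time-$p$ filter (a Feynman--Kac marginal depending only on $y_{0:p}$) and $\beta_{p:n}(x_p)=\mathbb{E}[\prod_{i=p+1}^{n}g(X_i,y_i)|X_p=x_p]$ is the backward information function. For $k\le p$ only $\alpha_p$ changes, and perturbing one potential in the filter recursion costs at most $\rho C_1\epsilon$ damped by $\gamma^{p-k}$, the damping coming from the Dobrushin coefficients of the prediction kernels, which are $\le 1-\rho^{-2}<1$ under (A\ref{hyp:A}-\ref{hyp:density_control}); since one $f$-step smooths $\beta_{p:n}$ so that its ratio $\sup/\inf$ over $x_p$ is at most $\rho^{2}$ (the overall exponential scale cancelling), the reweighting by $\beta_{p:n}$ transfers this perturbation to $\hat\eta_{n,p}$ with only an $O(\rho^{2})$ loss. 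For $k>p$ only $\beta_{p:n}$ changes, and the analogue follows by running the same argument on the backward Markov kernels $\overleftarrow{B}_{q,n}(x_{q+1},dx_q)\propto\alpha_q(dx_q)f(x_q,x_{q+1})$, whose Dobrushin coefficients are again $\le 1-\rho^{-2}$ under (A\ref{hyp:A}-\ref{hyp:density_control}), giving $\gamma^{k-p}$. Most of this is standard Feynman--Kac stability bookkeeping (as in the analyses underlying \cite{dds1,cappe}); the point demanding care is keeping every constant uniform in $n$, $p$, $k$ — in particular verifying that the normalising constants, though possibly exponentially small in $n$, never appear, since one works throughout with normalised measures and with ratios such as $\beta_{p:n}/\lambda(\beta_{p:n})$ whose dynamic range is controlled solely by $\rho$.
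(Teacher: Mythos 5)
Your proof is correct and arrives at the same $C\epsilon(n+1)$ bound, but by a genuinely different decomposition from the paper's. Both arguments reduce to the time-$p$ marginals using (A\ref{hyp:A}) (\ref{hyp:function}) and both ultimately rest on the same two ingredients: an $O(\epsilon)$ perturbation estimate for a single observation term, coming from the Lipschitz condition in (A\ref{hyp:A}) together with $\int |y|\phi(y)dy<\infty$, and geometric forgetting supplied by Dobrushin coefficients under the mixing condition (A\ref{hyp:A}) (\ref{hyp:density_control}). The difference is where the telescoping happens. The paper writes each smoothing marginal through the backward Markov representation $\hat{\eta}_n M_{n,\hat{\eta}_{n-1}}\cdots M_{p+1,\hat{\eta}_{p}}$ and telescopes over the backward kernels, replacing each $M_q$ by its ABC counterpart one at a time; each replacement costs $O(\epsilon)$ by Lemma \ref{prop:backward_control}, which in turn imports the uniform-in-time filter error bound $\|\hat{\eta}_k-\hat{\eta}_{k,\epsilon}\|_{TV}\leq C\epsilon$ from Theorem 2 of \cite{jasra}, and the damping comes from the product $\prod_q \beta(M_q)$. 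You instead interpolate over the observation potentials themselves, replacing $g(\cdot,y_k)$ by $g_\epsilon(\cdot,y_k)$ one index at a time, and prove the per-potential bound $C_3\epsilon\gamma^{|k-p|}$ via the forward--backward factorisation $\hat{\eta}_{n,p}\propto\alpha_p\,\beta_{p:n}$, handling $k\leq p$ through the filter recursion and $k>p$ through the backward kernels. Your route is more self-contained: it does not rely on the external filter-stability theorem and in effect re-derives it as the case $p=n$, and it makes explicit that the geometric decay in $|k-p|$ is precisely what keeps the total error at $O(n\epsilon)$ rather than $O(n^2\epsilon)$. The paper's route is shorter on the page because the uniform filter bound is taken off the shelf. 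Your supporting estimates --- $\rho^{-1}\leq g_\epsilon\leq\rho$, the relative perturbation $\rho C_1\epsilon$, the $\sup/\inf$ ratio $\rho^{2}$ for $\beta_{p:n}$ after one $f$-step, and the Dobrushin bound $1-\rho^{-2}$ --- are all valid under (A\ref{hyp:A}), so the sketch of the ``hard part'' fills in along standard Feynman--Kac lines.
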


\begin{rem}
The result establishes that the ABC error does not grow any faster than linearly in time or
$\epsilon$. This is important, as it is known that the SMC error when estimating $\mathbb{E}^{\epsilon}[\mathcal{V}_n(X_{0:n})|y_{0:n}]$ also grows at most linearly
in time \cite{dds1}. As a result, the overall error as the time parameter increases will not necessarily be dominated by one source of error (SMC or ABC). This suggests that an ABC approximation can perform reasonably well in general.
\end{rem}

\section{Simulation-Based Methods}\label{sec:simulation}

\subsection{SMC Methods}\label{sec:smc_desc}

In the context of HMMs, SMC algorithms approximate
$\{\hat{\eta}_{n}\}$ recursively by propagating a collection
of properly weighted samples, called particles, using a combination
of importance sampling and resampling steps. For the importance sampling
part of the algorithm at each step $n$ of the algorithm we will use
general proposal (Markov) kernels $H_{n}$
which possess normalizing constants that do not depend on the simulated
paths. A typical SMC algorithm is given below (we assume it terminates at time $p+1$):

\begin{itemize}
\item {0. Initialisation: set $n=0$; for $i\in\{1,\dots,N\}$ sample
$X_{0}^{(i)}\sim\eta_{0}$ and compute 
\[
G_{0}(x_0^{i}) = 
g(x_0^{i},y_0)
\]
 with $W_{0}^{i}=G_{0}(x_0^{i})$. } 
\item {1. Decide whether or not to resample, and if this is performed,
set all weights $\{W_{n}^{(i)}\}_{1\leq i\leq N}$ to $1$. Proceed
to step 2.} 
\item {2. Set $n=n+1$, if $n=p+1$ stop, else; for $i\in\{1,\dots,N\}$
sample $X_{n}^{i}|x_{n-1}^{i}\sim H_{n}(x_{n-1}^{i},\cdot)$,
compute \[
G_{n}(x_{n-1:n}^{i}) = \frac{g(x_n^i,y_n)f(x_{n-1}^i,x_n^i)}{H_n(x_{n-1}^i,x_n^i)}
\]
 and set $W_{n}^{i}=G_{n}(x_{n-1:n}^{i})W_{n-1}^{i}$ and return to the
start of step 1.} 
\end{itemize}

\subsection{Some details on resampling}

If one chooses to implement SMC without resampling steps, i.e.~to perform
sequential importance sampling, as time progresses, the variance of
the weights $\{W_{n}^{i}\}_{1\leq i\leq N}$ typically increases.
This has been commonly referred to as the \emph{weight} degeneracy
property. To counter this resampling is used: the particles 
are sampled with replacement, according to the normalized weights
$\{\bar{W}_{n}^{i}\}_{1\leq i\leq N}$ given by $
\bar{W}_{n}^{i}=\frac{W_{n}^{i}}{\sum_{j=1}^{N}W_{n}^{j}}$
 and then each $W_{n}^{i}$ is reset to $1$. We remark that more efficient alternatives are possible; see e.g.~\cite{doucet}.

If one resamples too often, the simulated past of the path of each
particle will be very similar to each other. This has been documented
as the \emph{path} degeneracy problem. A common remedy was to resample
only when an appropriate criterion drops beneath or goes above some threshold.
In the former case, a common criterion is the effective sample size
$\left(\sum_{j=1}^{N}\left(\bar{W}_{n}^{j}\right)^{2}\right)^{-1}$ \cite{Liu2001}.
This approach, however, does not ultimately solve the path degeneracy problem.
Path degeneracy has been a long standing bottleneck when static
parameters $\theta$ are estimated online using SMC methods by augmenting
them with the latent state; see \cite{dds2}. 
Considering the central limit theorem (CLT) associated to the SMC estimate of $\mathbb{E}[\mathcal{V}_n(X_{0:n})|y_{0:n}]$:
$$
\sum_{i=1}^N \bar{W}_n^i \mathcal{V}_n(x_{0:n}^i),
$$
it is remarked that the issue of path degeneracy leads, under very strong conditions on the HMM, to an asymptotic variance in this CLT that grows quadratically in $n$; see \cite{poya}.

Suppose one resamples, multinomially, at every iteration, except when
$n=p$. Denote the resampled index of the ancestor of particle $i$
at time $n$ by $a_{n}^{i}\in\{1,\dots,N\}$; this is a random variable
chosen with probability $\bar{W}_{n-1}^{a_{n-1}^{i}}$. Furthermore
the joint density of the sampled particles and the resampled indices
is \begin{equation}
\psi(x_{0:p}^{1:N},\bar{\mathbf{a}}_{0:p-1})=\bigg(\prod_{i=1}^{N}\eta_{0}(x_{1}^{i})\bigg)\prod_{n=1}^{p}\bigg(\prod_{i=1}^{N}
\bar{W}_{n-1}^{a_{n-1}^{i}}H_{n}(x_{n-1}^{a_{n-1}^{i}},x_{n}^{i})\bigg),\label{eq:smc_algo}\end{equation}
 where the complete genealogy of ancestors is denoted as $\bar{\mathbf{a}}_{n}=(a_{n}^{1},\dots,a_{n}^{N})$
and the randomly simulated values of the state as $x_{n}=(x_{n}^{1},\dots,x_{n}^{N})$.
Together they form the following SMC approximations for $\hat{\eta}_{n}$\[
\hat{\eta}_{n}^{N}(dx_{0:n})=\frac{1}{N}\sum_{j=1}^{N}\delta_{x_{0:n}^{a_{n}^{j}}}(dx_{0:n})\]
 and an approximation of the normalizing constant 
\begin{equation}
\widehat{Z}_{p}=\prod_{n=0}^{p}\bigg\{\frac{1}{N}\sum_{j=1}^{N}G_{n}(x_{n-1:n}^j)\bigg\}.\label{eq:normal_est_smc}
\end{equation}
 The complete ancestral genealogy at each time can always traced back
by defining an ancestry sequence $b_{0:n}^{i}$ for every $i\in\{1,\dots,N\}$
and $n\in\{0,\dots,p-1\}$, whose elements are given by the backward
recursion $b_{n}^{i}=a_{n}^{b_{n+1}^{i}}$ where $b_{p}^{i}=i$. 
This interpretation of SMC
approximations was introduced in \cite{andrieu} and will be used later 
together with $\psi(x_{0:p}^{1:N},\bar{\mathbf{a}}_{0:p-1})$ for
describing PMCMC.

\subsection{Forward only Smoothing}\label{sec:fos}

Due to the path degeneracy effect, one does not want to use the SMC approximation $\hat{\eta}_{n}^{N}(dx_{0:n})$ to perform smoothing.
One potential solution to this issue is the forward filtering backward smoothing algorithm and in particular the forward only implementation of 
it in \cite{dds2}. That is, the FFBS algorithm includes a backward simulation step, which is eliminated in \cite{dds2}.
We consider the SMC approximation of the expectation $\mathbb{E}[\mathcal{V}_n(X_{0:n})|y_{0:n}]$ where
$\mathcal{V}_{n}(x_{0:n})=\sum_{p=0}^n v_p(x_{p-1:p})$.

The construction of the procedure is as follows. It is first noted that
$$
\mathbb{E}[\mathcal{V}_n(X_{0:n})|y_{0:n}] = \int V_n(x_n) \hat{\eta}_n(x_{0:n}) dx_{0:n}
$$
with, for $n\geq 1$
$$
V_n(x_n) := \int \mathcal{V}_n(x_{0:n}) \hat{\eta}_n(x_{0:n-1}|x_n) dx_{0:n-1}
$$
$V_0(x_0)=0$, where $\hat{\eta}_n(x_{0:n-1}|x_n) = \hat{\eta}_n(x_{0:n})/\int \hat{\eta}_n(x_{0:n}) dx_{0:n-1}$,
then one can establish, e.g.~\cite{dds2}, that
$$
V_n(x_n) = \int [V_{n-1}(x_{n-1}) + v_{n}(x_{n-1:n})] \hat{\eta}_n(x_{n-1}|x_n)dx_{n-1}
$$
where $ \hat{\eta}_n(x_{n-1}|x_n) = \int \hat{\eta}_n(x_{0:n-1}|x_n)dx_{0:n-2}$.

These recursions lead to the following idea. Given the current particle approximation of the marginal of $\widehat{\eta}_{n-1}$, 
$\{\bar{W}_{n-1}^i,x_{n-1}^{i}\}_{1\leq i \leq N}$ and of $\{V_{n-1}(x_{n-1}^i)\}_{1\leq i \leq N}$ (write this $\{V^N_{n-1}(x_{n-1}^i)\}_{1\leq i \leq N}$), one performs the following:
update the SMC approximation as in Section \ref{sec:smc_desc} and set
\begin{equation}
V^N_n(x_n^i) =  \frac{\sum_{j=1}^N \bar{W}_{n-1}^j f(x_{n-1}^j,x_n^i) [V^N_{n-1}(x_{n-1}^j) + v_n(x_{n-1}^j,x_n^i)] }{\sum_{j=1}^N \bar{W}_{n-1}^j f(x_{n-1}^j,x_n^i)} \quad i\in\{1,\dots,N\}\label{eq:v_update}
\end{equation}
with $V^N_0=v_0$.
Then the SMC approximation of  $\mathbb{E}[\mathcal{V}_n(X_{0:n})|y_{0:n}]$, is exactly:
\begin{equation}
\sum_{i=1}^N \bar{W}_n^i V^N_n(x_n^i) \label{eq:smc_fos_est}.
\end{equation}
It is apparent that the computational cost of this recursion is $\mathcal{O}(N^2)$ per-time step. For functions such as $\mathcal{V}_{n}(x_{0:n})=\sum_{p=0}^n v_p(x_{p-1:p})$, it has
been seen that, under some assumptions, the asymptotic variance in the CLT 
associated to \eqref{eq:smc_fos_est} grows at most linearly in $n$. This is in contrast to growing quadratically at least quadratically in $n$, under similar assumptions, for the standard SMC estimate; see \cite{dds1} and also \cite{douc} for additional theoretical analysis.

\subsection{SMC for ABC}\label{sec:smc_abc}

If one cannot or does not want to compute $g(x,y)$, then the algorithm described in Section \ref{sec:smc_desc} can seldom be implemented. In such contexts, we can easily use an SMC algorithm to approximate the $\hat{\eta}_{n,\epsilon}$ in \eqref{eq:abc_smooth}. 
For example, in \cite{jasra}, at time 0, one samples the signal from $\eta_0$ and the pseudo observations from the likelihood to yield an incremental weight
$$
G_{0}(u_0^{i}) = \phi\bigg(\frac{u_0^{(i)}-y_0}{\epsilon}\bigg)
$$
where $u_0^{i}$ is the pseudo observation at time 0. At subsequent time-points one can sample from the signal transition and likelihood to obtain 
$G_{n}(u_n^{i}) = \phi\big(\frac{u_n^{(i)}-y_n}{\epsilon}\big)$. The selection of $\epsilon$ can be adaptive and different proposals (other than the state-dynamics) can be adopted; we refer to \cite{jasra} for some discussion.

It is remarked that a drawback of the algorithm is that when $d_y$ grows with $\epsilon,N$ fixed, one cannot expect the algorithm to
work well for every $\epsilon$; typically one must increase $\epsilon$ to yield reasonable algorithmic results and this is at the cost of increasing the bias (see Theorem \ref{theo:abc_error}). To maintain $\epsilon$ at a reasonable level, one
must consider more advanced strategies which are not investigated here.

In scenarios where $\phi\big(\frac{u-y}{\epsilon}\big)=\mathbb{I}_{\{u:|(u-y)/\epsilon|<1\}}\big(\frac{u-y}{\epsilon}\big)$,
a potentially better procedure is to use the rejection kernel in \cite{delmoral} (note this differs from the ideas of \cite{cornebise}). In this case, one initializes the SMC algorithm as above. However, at subsequent time-points, $n\geq 1$, one uses the kernel
\begin{eqnarray}
K_{n}((u_{n-1}^{1:N},x_{n-1}^{1:N}),(u_{n}^{i},x_{n}^{i})) & = &
G_{n-1}(u_{n-1}^{i}) H_n((u_{n-1}^{i},x_{n-1}^{i}),(u_{n}^{i},x_{n}^{i})) \nonumber \\ & &+ [1-G_{n-1}(u_{n-1}^{i})]\sum_{j=1}^N \frac{G_{n-1}(u_{n-1}^{j})}
{\sum_{l=1}^N G_{n-1}(u_{n-1}^{l})} H_n((u_{n-1}^{j},x_{n-1}^{j}),(u_{n}^{i},x_{n}^{i})).
\label{eq:rej_kernel}
\end{eqnarray}
In this case, at any given time-step, we will only resample those particles which have $|u_{n-1}-y_{n-1}|>\epsilon$. It has been shown by \cite[pp.~304-305]{delmoral} that this kernel produces a lower asymptotic variance in the CLT than an algorithm which resamples at every time step. It will be of interest to see if this advantage is realized when $N$ is finite, especially versus the dynamic resampling that is mentioned in Section \ref{sec:smc_desc}. This particular
SMC approach is termed `rejection SMC' (RSMC) throughout the article.

When using SMC for the ABC approximation of $\mathbb{E}[\mathcal{V}_n(X_{0:n})|y_{0:n}]$, the procedure in Section \ref{sec:fos} can be followed with only modifications in notations and state-spaces.

\subsection{PMCMC}\label{sec:pmcmc}

In this section we consider the scenario where one has unknown static parameters $\theta\in\mathbb{R}^{d_\theta}$ associated to the HMM. We concentrate upon batch inference. 

Particle Markov Chain Monte Carlo methods are MCMC algorithms operating on an extended state-space and targeting an extended distribution over the random variables appearing in the SMC algorithm. As in standard MCMC the idea is to run an ergodic
Markov chain to obtain samples from the distribution of interest.
The difference lies in the fact that, due to using an SMC approximation to generate a proposal, the invariant distribution of the simulated chain is defined on an extended state space, with an appropriate marginal being the distribution that we are interested in sampling from in the first place.

We will present the 
particle marginal Metropolis-Hastings (PMMH) algorithm of 
\cite{andrieu}. The PMMH algorithm can
sample from the target distribution 
\begin{equation}
\hat{\eta}_{n}(x_{0:n},\theta) \propto \bigg[\prod_{i=0}^n g_{\theta}(x_i,y_i) \eta_0(x_0)\prod_{i=1}^n f_{\theta}(x_{i-1},x_i)\bigg] \pi(\theta)
\label{eq:joint_post}
\end{equation}
where $\pi(\theta)$ is the prior on $\theta$. We concentrate upon the presentation in the scenario that one is interested in the original HMM; the ABC extension is simple and just uses the SMC procedures
described in Section \ref{sec:smc_abc} instead of those at the start of Section \ref{sec:smc_desc}. Note also, that the algorithm is given when using an SMC algorithm that resamples at each time-step; a dynamic
resampling schedule can also be used.

The PMMH algorithm proceeds as follows:
\begin{itemize}
\item {0. Set $\theta(0)$. Sample $x_{0:p}(0)^{1:N},\bar{\mathbf{a}}_{0:p-1}(0)$
from \eqref{eq:smc_algo} (which now depends upon $\theta$). Sample $k\in\{1,\dots,N\}$ from $\bar{W}_{p}^{k}$
and and compute $\widehat{Z}_p(0)$ as in \eqref{eq:normal_est_smc}.
Store $\widehat{Z}_p(0),k(0),x_{0:p}(0)^{1:N},\bar{\mathbf{a}}_{1:p-1}(0),\theta(0)$. Set $i=1$} 
\item {1. Propose a new $\theta'$ from a candidate $q(\theta(i-1),\cdot)$ and $x_{0:p}^{'1:N},\bar{\mathbf{a}}'_{1},\dots,\bar{\mathbf{a}}'_{p-1}$
and $k'$ as in step 0. Accept or reject this as the new state of
the chain with probability \[
1\wedge\frac{\widehat{Z}'}{\widehat{Z}(i-1)} \frac{\pi(\theta')q(\theta',\theta(i-1))}{\pi(\theta(i-1))q(\theta(i-1),\theta')}
\]
 If we accept, set $\left(\widehat{Z}(i),k(i),x_{0:p}^{1:N}(i),\bar{\mathbf{a}}_{1:p-1}(i),\theta(i)\right)=\left(\widehat{Z}',k',x_{0:p}^{'1:N},\bar{\mathbf{a}}'_{1:p-1},\theta'\right)$, otherwise\\
$\left(\widehat{Z}(i),k(i),x_{0:p}^{1:N}(i),\bar{\mathbf{a}}_{1:p-1}(i),\theta(i)\right)$ $=\left(\widehat{Z}(i-1),k(i-1),x_{0:p}^{1:N}(i-1),\bar{\mathbf{a}}_{1:p-1}(i-1),\theta(i-1)\right)$.
Set $i=i+1$ and return to 1. } 
\end{itemize}
In \cite{andrieu} it is shown that the sequence $\{x_{0:p}^{b_{0:p}^{k(i)}}(i),\theta(i)\}_{i\geq 1}$ provides an approximation of \eqref{eq:joint_post}, for any $N\geq 1$.

If one is interested in approximating, say 
$$
\int \mathcal{V}_n(x_{0:n},\theta)\hat{\eta}_{n}(x_{0:n},\theta)d(x_{0:n},\theta)
$$
as noted by \cite{olsson}, the FFBS estimate can be used, based upon the SMC at each time-step, by simply extending the definition of $V_n^N$ in \eqref{eq:v_update}
to include $\theta$
 (c.f.~\eqref{eq:smc_fos_est}):
$$
\frac{1}{M}\sum_{i=1}^M \bigg\{
\sum_{j=1}^N \bar{W}_n^j(i) V^N_n(x_n^j(i),\theta(i))
\bigg\}
$$
where the first summation is over $M$ iterations of the PMMH algorithm. Trivially, one can extend this to the case where only a forward pass as in Section \ref{sec:fos} is used. A critical point is despite the improvement in the SMC estimation, whether this is necessarily
reasonable given the increase in computational cost and the iterative nature of the MCMC; especially in an ABC context, which is presently not known to our knowledge.
We remark again, that any of the SMC for ABC algorithms mentioned in Section \ref{sec:smc_abc} can be adopted, when considering the ABC approximation of \eqref{eq:joint_post}; whether using dynamic resampling or the kernel
\eqref{eq:rej_kernel} the estimate of the normalizing constant is unbiased - see \cite{andrieu} for why this is of interest.

%

\section{Theoretical Analysis}\label{sec:theory}

\subsection{Set-Up}

We consider the error in estimation of smoothed additive functionals, when using an ABC approximation of the HMM. 
This is in the scenario where one does not need to estimate static parameters. 
Recall that we have already considered the ABC error in Theorem \ref{theo:abc_error}; the main objective
is to present a result with regards to the SMC error and the overall effect on the approximation of $\mathbb{E}[\mathcal{V}_n(X_{0:n})|y_{0:n}]$.
We will use (A\ref{hyp:A})
which only applies in the scenario where one uses a kernel density in the ABC approximation (i.e.~not an indicator function). In addition, the SMC algorithm samples from the transition density of the state, with multinomial resampling at every time step (that is, RSMC is not considered). 
These hypotheses can be removed with a more technical proof.
In addition, we condition upon the data and do not treat the randomness of these quantities. We simply assume that we are given a data set and do not address the issue of whether they may, or may not originate from a HMM.

\subsection{Result}

Below the $\mathbb{L}_p-$norm is associated to the random process generated by the SMC algorithm. We also use the abuse of notation $G_{n,\epsilon}(x) = \phi\big(\frac{x-y_n}{\epsilon}\big)$, $x\in\mathbb{R}^{d_y}$, to represent the incremental weights of the SMC algorithm (that is as described in Section \ref{sec:smc_abc}). Note that, in comparison to \eqref{eq:smc_fos_est}, we resample at every time-point, so we can use the incremental weights in the estimate, instead of the normalized weights. Note that
\begin{equation}
\Xi_n[\epsilon,N,\mathcal{V}_n,y_{0:n}] = \sum_{i=1}^N\frac{G_{n,\epsilon}(x_n^i)}{\sum_{j=1}^N G_{n,\epsilon}(x_n^j)} V_{n,\epsilon}^N(x_n^i)
\label{eq:smc_est}
\end{equation}
where $\Xi_n[\epsilon,N,\mathcal{V}_n,y_{0:n}]$ is the quantity that we discussed in Section \ref{sec:intro}; from herein we use the R.H.S.~of \eqref{eq:smc_est} to denote the SMC estimate.

\begin{theorem}\label{theo:main_theorem}
Assume (A\ref{hyp:A}). There exist a $C<+\infty$ and
for any $p\geq 1$ there exist a $a_p<\infty$ such that for any
$\epsilon>0$, $N\geq 1$, $n\geq 1$ and $y_{0:n}$:
$$
\bigg\|\sum_{i=1}^N\frac{G_{n,\epsilon}(x_n^i)}{\sum_{j=1}^N G_{n,\epsilon}(x_n^j)} V_{n,\epsilon}^N(x_n^i) - \mathbb{E}[\mathcal{V}_n(X_{0:n})|y_{0:n}]\bigg\|_p \leq (n+1)\bigg[\frac{a_p\overline{\delta}(\epsilon)}{\sqrt{N}} + C\epsilon\bigg]
$$
where $\overline{\delta}(\epsilon)=\max\{\delta(\epsilon)\max\{\delta(\epsilon)^2,\delta(\epsilon)^4\}, \delta(\epsilon)^{3}\}$, with $\delta(\epsilon)$ as in (A\ref{hyp:A}) (\ref{hyp:smooth_abc}) and
$\mathbb{E}[\cdot|y_{0:n}]$ is the expectation w.r.t.~the joint smoothing distribution.
\end{theorem}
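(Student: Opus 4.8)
The plan is to split the total error by the triangle inequality, exactly as in \eqref{intro:err:decomp}: bound the left-hand side by the \emph{SMC error} $\big\|\Xi_n[\epsilon,N,\mathcal{V}_n,y_{0:n}]-\mathbb{E}^{\epsilon}[\mathcal{V}_n(X_{0:n})|y_{0:n}]\big\|_p$ plus the \emph{ABC error} $\big|\mathbb{E}^{\epsilon}[\mathcal{V}_n(X_{0:n})|y_{0:n}]-\mathbb{E}[\mathcal{V}_n(X_{0:n})|y_{0:n}]\big|$. The latter is $\le C\epsilon(n+1)$ by Theorem \ref{theo:abc_error}, so everything reduces to controlling the SMC error, i.e.\ the $\mathbb{L}_p$ error of the forward-only smoother of Section \ref{sec:fos} applied to the \emph{auxiliary} (ABC) HMM — the HMM with signal transition $f$ and, once the pseudo-observation $u_n$ is carried in the particle, incremental weight $G_{n,\epsilon}(u_n)=\phi((u_n-y_n)/\epsilon)$, whose time-$n$ smoothing marginal is $\hat\eta_{n,\epsilon}$. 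Since we resample at every step, this estimate is \eqref{eq:smc_est} with the incremental weights $G_{n,\epsilon}$.

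First I would use the linearity of the recursion \eqref{eq:v_update}: both the exact function $V_{n,\epsilon}$ and its particle version $V^N_{n,\epsilon}$ are linear in the blocks $v_p$, so that $\Xi_n[\epsilon,N,\mathcal{V}_n,y_{0:n}]-\mathbb{E}^{\epsilon}[\mathcal{V}_n(X_{0:n})|y_{0:n}]=\sum_{p=0}^{n}\big(\Xi_n[\epsilon,N,v_p,y_{0:n}]-\mathbb{E}^{\epsilon}[v_p(X_p)|y_{0:n}]\big)$, each summand being the forward-smoothing error for the single-block functional $x_{0:n}\mapsto v_p(x_p)$. By Minkowski it then suffices to bound each single-block $\mathbb{L}_p$ error by $a_p\overline\delta(\epsilon)/\sqrt N$ \emph{uniformly in $0\le p\le n$}; summing the $n+1$ terms produces the prefactor $(n+1)$. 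This per-block uniformity is exactly the mechanism behind the linear (rather than quadratic) growth in $n$ of the forward-only estimate, and here it must be re-derived for the ABC-HMM.

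For a single block I would decompose further, in the standard way of \cite{dds1,douc}, as $\hat\eta_{n,\epsilon}^N\big(V^{N,(p)}_{n,\epsilon}-V^{(p)}_{n,\epsilon}\big)+\big(\hat\eta_{n,\epsilon}^N(V^{(p)}_{n,\epsilon})-\hat\eta_{n,\epsilon}(V^{(p)}_{n,\epsilon})\big)$, where $\hat\eta_{n,\epsilon}^N$ is the particle filter of the ABC-HMM and $V^{(p)}_{n,\epsilon}$ is the exact smoothing function of the block $v_p$. The second term is a classical particle-filter error for a \emph{fixed, bounded} test function with $\|V^{(p)}_{n,\epsilon}\|\le\|v_p\|\le\overline v$; under (A\ref{hyp:A}) the Feynman--Kac semigroup of the ABC-HMM forgets its initial condition geometrically, with Dobrushin/contraction coefficients bounded away from $1$ \emph{uniformly in $n$ and $\epsilon$} — the bounds on $f$ from (A\ref{hyp:A}-\ref{hyp:density_control}) together with the sandwich $\underline\alpha(\epsilon)\le G_{n,\epsilon}\le\overline\alpha(\epsilon)$ from (A\ref{hyp:A}-\ref{hyp:smooth_abc}) make those coefficients functions of $\rho$ and $\delta(\epsilon)$ only — so the usual $\mathbb{L}_p$ inequalities (e.g.\ \cite{delmoral}) give a bound $a_p\,\mathrm{poly}(\delta(\epsilon))/\sqrt N$, uniform over $p\le n$. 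The first term is handled by induction on $n$: when $n=p$ the recursion \eqref{eq:v_update} gives $V^{N,(p)}_{p,\epsilon}\equiv v_p$ exactly, and for $n>p$ it transmits the function backward through a composition of $n-p$ random ratio operators whose denominators are bounded below using (A\ref{hyp:A}-\ref{hyp:density_control})--(\ref{hyp:smooth_abc}); each such operator contracts (again with a $\rho,\delta(\epsilon)$-rate), so a Minkowski/Burkholder estimate on the one-step numerator and denominator errors — each $O(1/\sqrt N)$ with $\delta(\epsilon)$-dependent constants — accumulates to a bound on $\max_i|V^{N,(p)}_{n,\epsilon}(x_n^i)-V^{(p)}_{n,\epsilon}(x_n^i)|$ that is uniform in $n-p$. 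Collecting the powers of $\delta(\epsilon)$ contributed by the ratio bounds and by the two nested layers of mixing (one for the forward filter, one for the backward transmission of the error to time $n$) produces precisely $\overline\delta(\epsilon)=\max\{\delta(\epsilon)\max\{\delta(\epsilon)^2,\delta(\epsilon)^4\},\delta(\epsilon)^3\}$.

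The main obstacle is this last step: the uniform-in-$n$ $\mathbb{L}_p$ control of $V^{N,(p)}_{n,\epsilon}-V^{(p)}_{n,\epsilon}$ through the nonlinear ratio recursion \eqref{eq:v_update}, carried out while tracking the precise dependence of every constant on $\delta(\epsilon)$. It is here that the forgetting properties of the ABC-HMM — stable under the uniform bounds of (A\ref{hyp:A}) but degrading through $\delta(\epsilon)$ as $\epsilon\downarrow 0$ — must be combined carefully with the martingale/Minkowski bounds for the particle approximations; this single estimate simultaneously fixes the $(n+1)$ prefactor and the form of $\overline\delta(\epsilon)$, after which adding the $C\epsilon(n+1)$ ABC bound from Theorem \ref{theo:abc_error} gives the stated inequality.
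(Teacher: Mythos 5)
Your top-level argument coincides exactly with the paper's proof of Theorem \ref{theo:main_theorem}: add and subtract $\mathbb{E}^{\epsilon}[\mathcal{V}_n(X_{0:n})|y_{0:n}]$, apply Minkowski, and dispose of the deterministic term with Theorem \ref{theo:abc_error}. Where you diverge is in how the stochastic (SMC) term is controlled. The paper does not re-derive it inside this proof; it invokes Theorem \ref{theo:smc_error} of Appendix \ref{app:smc_bound}, whose proof splits the self-normalized estimator into a normalization error $\eta_{n,\epsilon}(G_{n,\epsilon})-\frac{1}{N}\sum_j G_{n,\epsilon}(x_n^j)$ (bounded via Theorem 7.4.4 of \cite{delmoral}) plus the unnormalized error $\frac{1}{N}\sum_i G_{n,\epsilon}(x_n^i)V^N_{n,\epsilon}(x_n^i)-\mathbb{Q}_{n,\epsilon}(G_{n,\epsilon}\mathcal{V}_n)$, the latter handled by bounding the oscillations of the backward Feynman--Kac operators $P^N_{p,n,\epsilon}(G_{n,\epsilon}\mathcal{V}_n)$ as in Theorem 3.2 of \cite{dds1}; the powers of $\delta(\epsilon)$ in $\overline{\delta}(\epsilon)$ come out of the explicit bounds $b_{p,n,\epsilon}\leq\delta(\epsilon)\rho^4$ and $\beta(S_{p,q,\epsilon})\leq(1-\delta(\epsilon)^{-1}\rho^{-8})^{q-p}$. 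Your alternative --- per-block linearity, a filter error for the fixed bounded function $V^{(p)}_{n,\epsilon}$, and an induction through the ratio recursion \eqref{eq:v_update} --- is a legitimate route, closer in spirit to \cite{douc}, and the uniform forgetting you invoke is indeed available under (A\ref{hyp:A}); it makes the origin of the $(n+1)$ prefactor more transparent. What it costs is that the step you yourself flag as the main obstacle, the uniform-in-$(n-p)$ $\mathbb{L}_p$ control of $V^{N,(p)}_{n,\epsilon}-V^{(p)}_{n,\epsilon}$ with explicit $\delta(\epsilon)$ dependence, is precisely the content of the paper's Lemma \ref{lem:updated_lp_bound} and remains a sketch in your write-up; in particular nothing in your bookkeeping guarantees you would land on the specific exponent $\overline{\delta}(\epsilon)=\max\{\delta(\epsilon)\max\{\delta(\epsilon)^2,\delta(\epsilon)^4\},\delta(\epsilon)^3\}$ of the statement rather than some other polynomial in $\delta(\epsilon)$. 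So the architecture is sound and the reduction to the two error theorems matches the paper, but as a proof of the theorem as stated the SMC half is an outline of the appendix rather than a complete argument.
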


\begin{proof}
After adding and subtracting $\mathbb{E}^{\epsilon}[\mathcal{V}_n(X_{0:n})|y_{0:n}]$
one can apply Minkowski followed by Theorem \ref{theo:smc_error} and Theorem \ref{theo:abc_error} to conclude. 
\end{proof}

\begin{rem}\label{rem:error_avg}
The bound is decomposed into two sources of error. For the SMC approximation, the error tends to decrease as $\epsilon$ grows as one would expect. Conversely, the ABC error term
$|\mathbb{E}^{\epsilon}[\mathcal{V}_n(x_{0:n})|y_{0:n}]-\mathbb{E}[\mathcal{V}_n(x_{0:n})|y_{0:n}]|$ 
 grows as $\epsilon$ grows. Both error rates increase at most linearly with the time parameter.
When $\mathcal{V}_n(x_{0:n}) = \sum_{p=0}^n v_p(x_p)/(n+1)$, one can remove the linear decay in the bias term. In addition, with $N$ fixed, the SMC error can be shown to \emph{decrease} with $n$; see \cite{dds1,dds2}.
\end{rem}

\section{Simulations}\label{sec:numerics}

\subsection{Model}

Our numerical studies are implemented on the following HMM also considered in, for example, \cite{andrieu}. We take $d_x=d_y=d$ and the model is: 
\begin{eqnarray*}
X_n & = & \frac{X_{n-1}}{2} + \frac{25X_{n-1}}{1+X^2_{n-1}} + 8\cos(1.2n)  + \zeta_{X,n}, \;\;\; n\ge1,\\
Y_n & = & \frac{X^2_n}{20} + \zeta_{Y,n} \;\;\; n\ge 0,
\end{eqnarray*}
with
$\zeta_{X,n}\stackrel{\textrm{i.i.d.}}{\sim}  \mathcal{N}_d\left(0,\sigma_X^2 \mathbf{I}_d\right)$ 
and independently
$\zeta_{Y,n} \stackrel{\textrm{i.i.d.}}{\sim} \mathcal{N}_d\left(0,\sigma_Y^2 \mathbf{I}_d\right)$ and
$X_0  = 0_d$
with $0_d$ the $d$-dimensional zero vector, $\mathbf{I}_d$ the $d\times d$ identity matrix and $\mathcal{N}_d(\mu,\Sigma)$
the $d$-dimensional normal distribution of mean $\mu$ and covariance matrix $\Sigma$.
Whilst the conditional density of the observations given the state is not intractable, it will facilitate an investigation into the accuracy of ABC. In this scenario one can obtain an approximation of the `correct' answers using SMC/PMCMC with many particles/iterations. 

The objective of our numerical study, for smoothing is to consider the accuracy of ABC, when only considering forward only smoothing (the performance of forward only smoothing relative to using the path of particles has been studied elsewhere - for example \cite{dds1}). 
We also want to investigate the worth of RSMC in the ABC context; recall the asymptotic improvements predicted in \cite{delmoral}. Along the way we also consider the issue of the dimension
of the HMM and the utility of using ABC in high-dimensions. Finally, the time dependence of the errors are presented, to allow some investigation into Theorem \ref{theo:main_theorem}.
When considering PMCMC, we are concerned with both the accuracy of ABC for batch static parameter estimation and the worth of including forward only smoothing as a `post-processing' of the MCMC output.

\subsection{Smoothing}

\subsubsection{Implementation Details}

We consider estimating the expected mean state over the observation period $[0,100]$; i.e.~$v_{p}(x_p) = x_p/101$, $p\in\{0,\dots,100\}$.
We set $\sigma_X^2=10$ and $\sigma_Y^2=1$.
The data are simulated from the true model with the given parameter values.
To obtain a true answer with which to understand the accuracy of the methods we investigate, we use the mean estimate obtained over 50 implementations of the forward smoothing procedure, targeting the exact model, with $5000$ particles.

The algorithms for SMC (that is, approximating the exact model) and SMC ABC are run for 10 different values of $N\in\{100,200,\dots,
1000\}$ which are labelled $N_1,\dots,N_{10}$ in the Figures. 
The SMC ABC approach, i.e.~that dynamically resamples, does so when the effective
sample size drops below $N/2$. For the SMC, SMC ABC and RSMC (which targets the ABC approximation) the hidden state dynamics are used as proposals. 
We also run the algorithms for $d\in\{1,2,5,10\}$ which will also allow us to assess the accuracy of SMC for an ABC HMM in `high' dimensions. 
To investigate the accuracy of ABC, we compute a true value for $\mathbb{E}[\mathcal{V}_n(X_{0:n})|y_{0:n}]$, as discussed above, and then average
the $\mathbb{L}_1-$error of the estimate $\Xi_n[\epsilon,N,\mathcal{V}_n,y_{0:n}]$, calculated with respect to the computed true value, across its dimension, i.e.~$e^{N,\epsilon}_n=\frac1d\left|\mathbb{E}[\mathcal{V}_{n}(X_{0:n})|y_{0:n}]-\Xi_{n}[\epsilon,N,\mathcal{V}_{n},y_{0:n}]\right|$, with $|\cdot|$ the $\mathbb{L}_1$-distance. An SMC procedure targeting the exact HMM is also run, to provide some benchmark performance; the corresponding error, $e^N_n$, is similarly calculated as the dimension-averaged $\mathbb{L}_1-$error of the SMC estimate $\Xi_n[N,\mathcal{V}_n,y_{0:n}]$ with respect to the same true value as above.
All results are averaged over 50 independent runs.

For the ABC specification, we set $\phi\big(\frac{u-y}{\epsilon}\big)=\mathbb{I}_{ \{u:|u-y/\epsilon<1|\} }(u)$
; this will allow us to easily understand the impact of the RSMC. 
In the implementations of the two SMC ABC schemes described in Section \ref{sec:smc_abc} $\epsilon$ is set to be the smallest obtainable in a preliminary set of runs. That is, the smallest
$\epsilon$ for which the weights do not become zero at any time-point.

To conclude the numerical study, we consider the time-dependence of the bias. We consider the SMC and ABC using only forward only smoothing as the time parameter increases from 10, 20,$\dots$, 100, $d\in\{1,2,5,10\}$. The SMC algorithm is run with $N=1000$. RSMC is not considered.

\subsubsection{Results}

The exact and ABC forward smoothing errors, $e^N_n$ and $e^{N,\epsilon}_n$ respectively, are presented in Figure \ref{fig:ForwardSmoothingErrors_SMCvsABC}; the mean errors obtained across the 50 runs are displayed, along with their standard errors. 
Under the ABC HMM, as one would expect, in almost all of the plots the accuracy of the estimate $e^{N,\epsilon}_n$ cannot improve with increasing $N$ (as the bias persists), but the variability of the estimates falls - i.e.~the SMC component of $e^{N,\epsilon}_n$ is being controlled.
In the plots for $d\in\{1,2\}$, the exact implementation outperforms, as one would expect, the ABC approximation in terms of accuracy. This is illustrated by the means and standard errors of the smoothing errors $e^N_n$ being smaller than those of the ABC smoothing errors $e^{N,\epsilon}_n$ for a vast majority of the values of $N$.  
Interestingly, as the dimension increases ($d\in\{5,10\}$), the ABC estimates appear to be \emph{more} accurate than their SMC counterparts (at least for this function). One might explain this as follows.
For SMC in high-dimensions, one often requires $N=\mathcal{O}(\kappa^d)$ ($\kappa>1$) for some stability, but this is not the case for ABC - see \cite{beskos} and the references therein. 
These (empirical) results suggest that ABC is a viable approximation technique in higher-dimensions, where it can be difficult to find SMC techniques that always work well.

\begin{figure}[!h]\centering
\vspace{-5cm}
\includegraphics[width=14cm,height=20cm,angle=0]{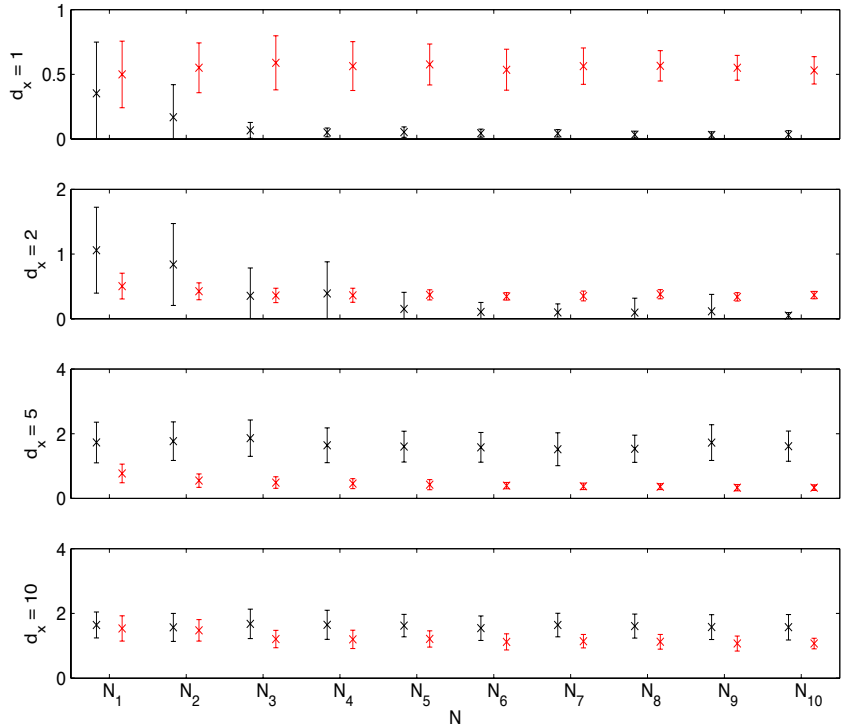}
\vspace{-5cm}
\caption{Mean and standard deviations of the smoothing errors associated with the estimates 
of the mean state ($v_p(x_p)=x_p/101$) obtained over 50 independent implementations of the forward smoothing procedure targeting the true HMM ($e^N_n$, black) and its ABC approximation ($e^{N,\epsilon}_n$, red). The horizontal
axis represents the 10 different values of $N$ for which we ran both algorithms.}
\label{fig:ForwardSmoothingErrors_SMCvsABC}
\end{figure}

\begin{figure}[!h]\centering
\vspace{-5cm}
\includegraphics[width=14cm,height=20cm,angle=0]{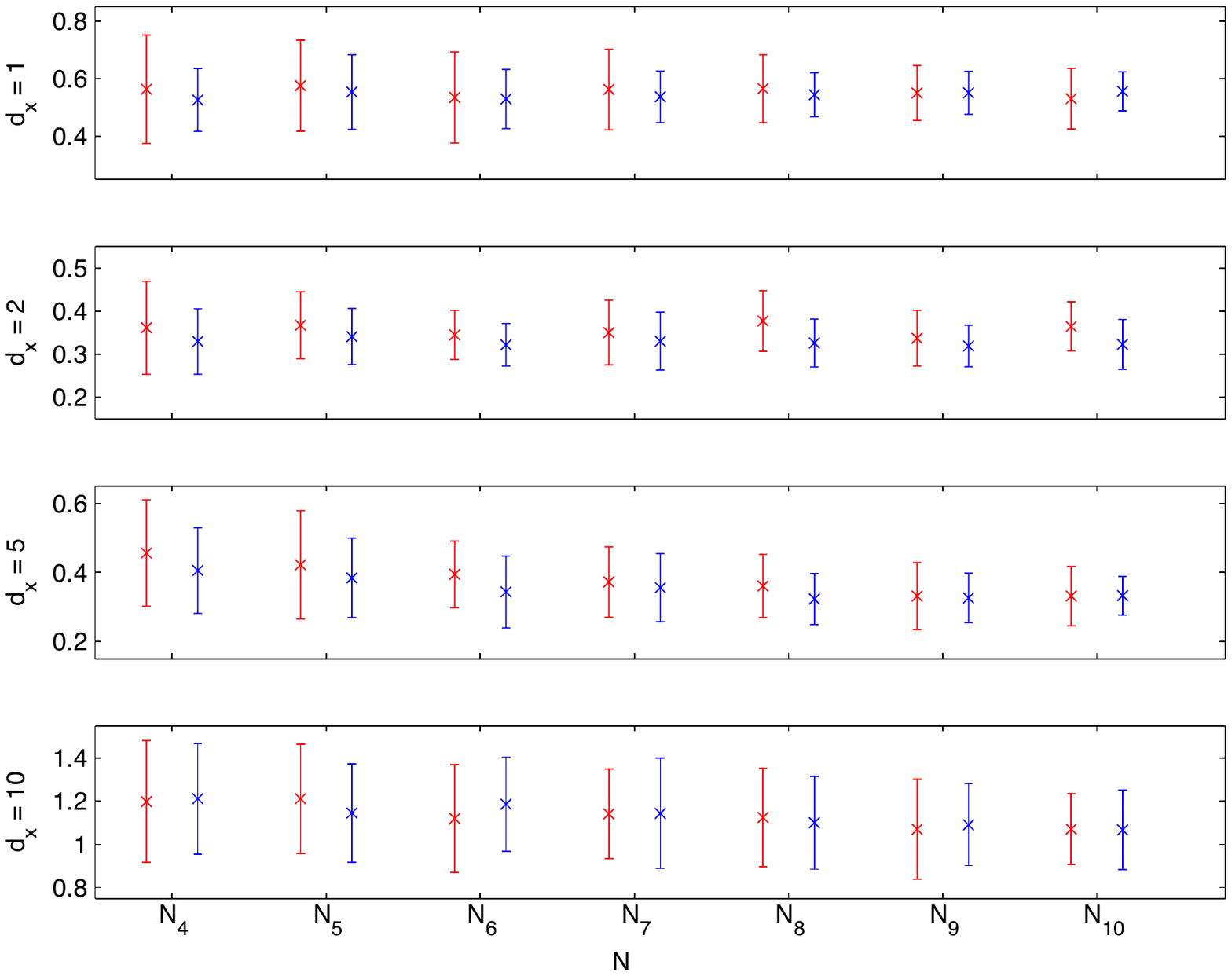}
\vspace{-5cm}
\caption{Mean and standard errors of the smoothing errors associated with the estimates 
of the mean state ($v_p(x_p)=x_p/101$) obtained over 50 independent implementations of the forward smoothing
SMC ($e^{N,\epsilon}_n$, red) and RSMC ($e^{N,\epsilon,R}_n$, blue) procedures targeting the ABC approximation of the smoothing distribution.}
\label{fig:ForwardSmoothingErrors_ABCSMCvsABCRSMC}
\end{figure}

\begin{figure}[!h]\centering
\vspace{-5cm}
\includegraphics[width=14cm,height=20cm,angle=0]{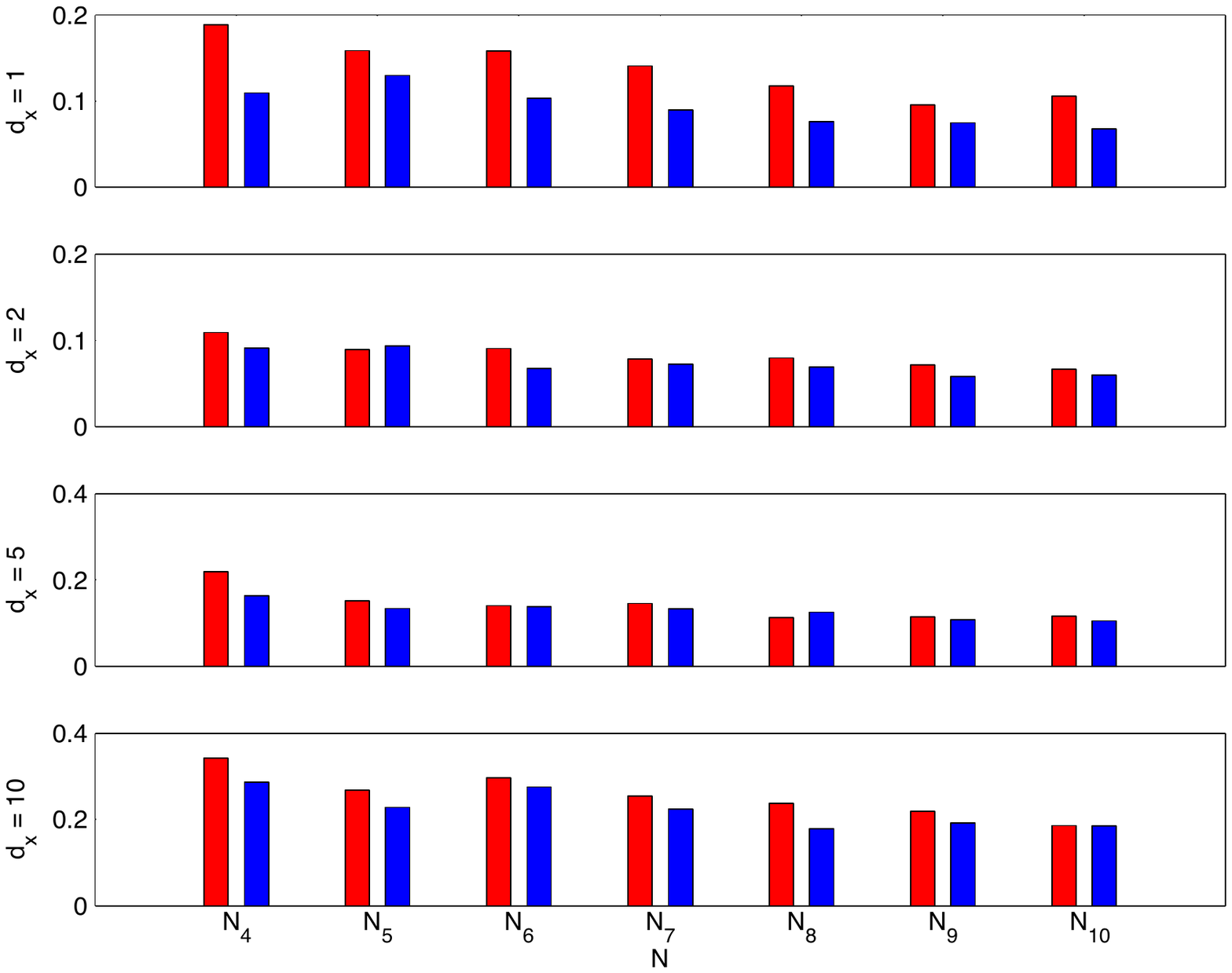}
\vspace{-5cm}
\caption{Standard errors of the smoothing errors associated with the estimates of the mean state ($v_p(x_p)=x_p/101$) obtained over 50 independent implementations of the forward smoothing SMC ($e^{N,\epsilon}_n$, red) and RSMC ($e^{N,\epsilon,R}_n$, blue) procedures targeting the ABC approximation of the smoothing distribution. These standard errors are also displayed in Figure \ref{fig:ForwardSmoothingErrors_ABCSMCvsABCRSMC}.}
\label{fig:ForwardSmoothingStdErrors_ABCSMCvsABCRSMC}
\end{figure}

In Figures \ref{fig:ForwardSmoothingErrors_ABCSMCvsABCRSMC} and \ref{fig:ForwardSmoothingStdErrors_ABCSMCvsABCRSMC} we compare the performance of SMC and RSMC for performing estimation of the smoothing expectation under the ABC HMM. Figure \ref{fig:ForwardSmoothingErrors_ABCSMCvsABCRSMC} presents the mean and standard errors of the ABC smoothing errors that correspond to estimates calculated using the SMC method and the RSMC method; distinction is made through a further subscript, with the ABC smoothing errors corresponding to the RSMC estimates being denoted $e^{N,\epsilon,R}_n$. For clarity of presentation, we only display the results of 7 of the 10 values of $N$ which were run.
From Figure \ref{fig:ForwardSmoothingErrors_ABCSMCvsABCRSMC}, it is noted that the accuracy of the SMC and RSMC procedures for performing ABC forward smoothing are very comparable, with the RSMC estimates even appearing to offer a marginal improvement over the SMC estimates in terms of mean smoothing error. The standard errors of $e^{N,\epsilon}_n$ and $e^{N,\epsilon,R}_n$ are more clearly presented in Figure \ref{fig:ForwardSmoothingStdErrors_ABCSMCvsABCRSMC}. This figure shows that, under the ABC HMM, the variability of the RSMC procedure seems to be slightly less than that of the SMC procedure, especially as $N$ is allowed to grow. In addition, the observed run times for the ABC forward smoothing procedure were consistently lower when using the RSMC method against the SMC ABC approach. These results suggest, at least under the criteria considered, that the use of RSMC would not only be a viable alternative, but it could be preferable to using SMC with dynamic resampling. This is when using forward smoothing to perform inference with respect to the ABC approximation of the HMM.

In Figure \ref{fig:time} we consider the time dependence of the error $e^{N,\epsilon}_n$ associated with the SMC method applied to the ABC HMM. We can observe that in this scenario, there is not any obvious increase in the overall error $e^{N,\epsilon}_n$, with time, for this particular estimate associated to the smoothing distribution. This is consistent with our theoretical results which illustrate that the error does not grow any worse than linearly with time. As expected, on the basis of the results above, the quality of the SMC approximation appears to deteriorate (for $N$ fixed) as the dimension grows, but such a deterioration is less obvious for the ABC approximation.

\begin{figure}[!h]\centering
\vspace{-5cm}
\includegraphics[width=14cm,height=20cm,angle=0]{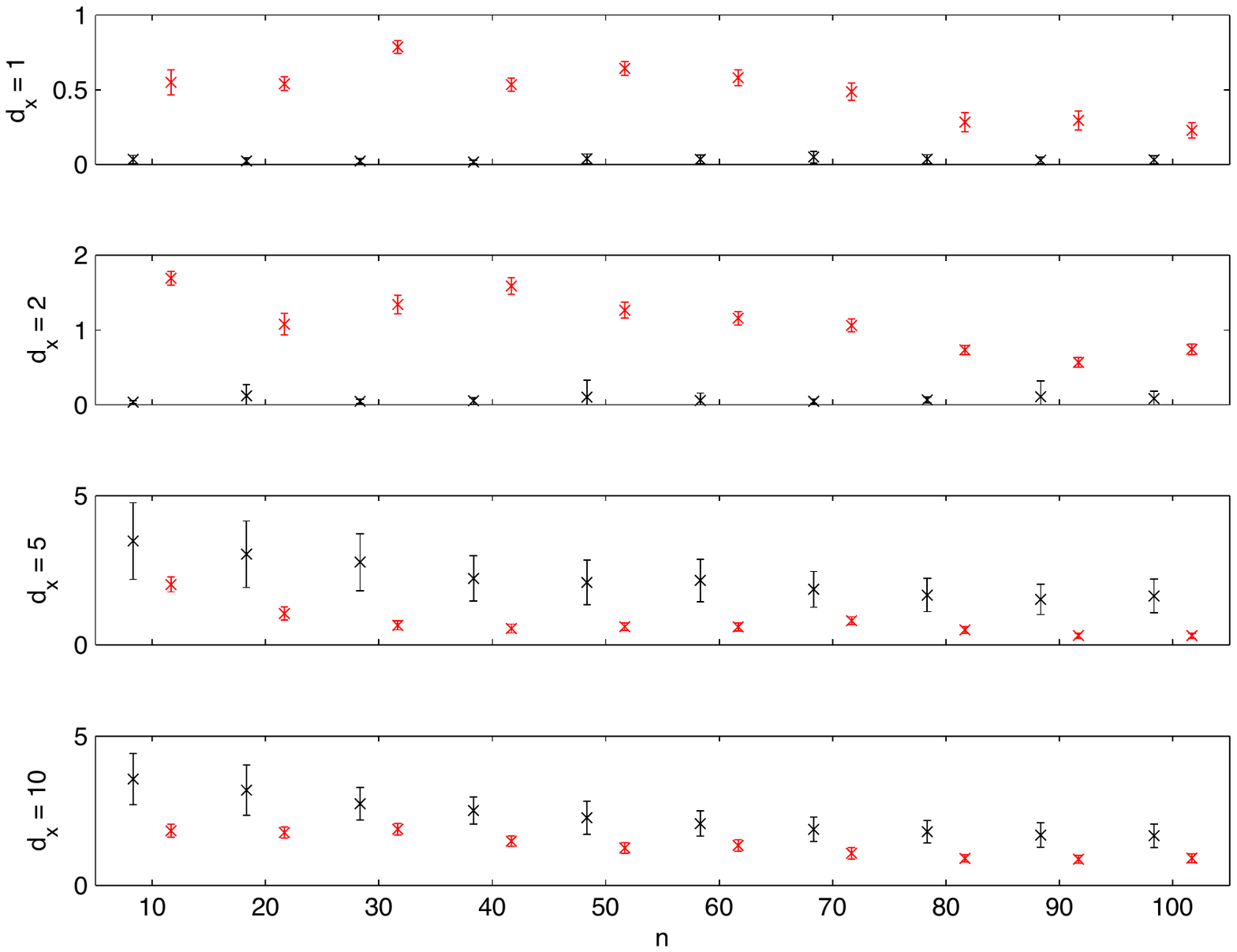}
\vspace{-5cm}
\caption{Time dependence (horizontal axis) of the smoothing errors associated with the estimates 
of the mean state ($v_p(x_p)=x_p/(n+1)$, $n=10,\ldots,100$) obtained over 50 independent implementations of the forward smoothing
SMC ($e^N_n$, black) and ABC ($e^{N,\epsilon}_n$, red) procedures targeting the true and ABC approximation of the smoothing distribution respectively.}
\label{fig:time}
\end{figure}

\subsection{PMMH}

\subsubsection{Implementation Details}

As for the smoothing, we estimate the expected mean state over the observation period $[0,100]$ as well as estimating the static parameters $\theta=(\sigma_X,\sigma_Y)$, with priors
as in \cite{andrieu}. We set $d=1$ throughout and 
the data are the same as for the smoothing experiment (when $d=1$). To obtain a proxy for the true value
we ran a PMMH algorithm as described in \cite{andrieu} for 50000 iterations with 20000 particles (no forward smoothing)
and averaged the results over 50 runs. When no forward smoothing is used, only the selected
particle (see Section \ref{sec:pmcmc}) is used for the estimate of a smoothed additive functional.

The ABC approximation was as for the smoothing example (that is, the function $\phi\big(\frac{u-y}{\epsilon}\big)=\mathbb{I}_{ \{u:|u-y/\epsilon<1|\} }(u)$). To allow direct comparison to running an exact PMMH algorithm (that is, one
which uses a dynamic resampling SMC algorithm on the true HMM) we only adopt an SMC ABC algorithm, i.e. we do not consider the use of RSMC here. For the SMC and SMC ABC the hidden state dynamics are used as proposals. 
The PMMH proposal on the parameters is as in \cite{andrieu}.
We run the algorithms for 50000 iterations with a 10000 iteration burn in. In addition, 5 different
values of $N$ are considered $N\in\{100,200,\dots,500\}$ for the forward only smoothing approaches. In comparing to PMMH algorithms that do not use all the particles (and hence
the computational cost of the SMC algorithm is $\mathcal{O}(N)$) a number of particles
with similar computational costs are run; these were $\{4427,17139,39020,68258,107007\}$.

As with SMC smoothing, the accuracy of the PMMH procedures in estimating the smoothing expectation $\mathbb{E}[\mathcal{V}_n(X_{0:n})|y_{0:n}]$ is measured using $e^N_n$ and $e^{N,\epsilon}_n$. As above, these errors are calculated as the (dimension-averaged) $\mathbb{L}_1-$errors of the PMMH estimates under the exact and ABC HMM, respectively. All results are repeated over 50 independent runs.

\subsubsection{Results}

Our results are displayed in Figures \ref{fig:pmmh_smooth}-\ref{fig:pmmh_sigy}.
In Figure \ref{fig:pmmh_smooth}, we can observe the accuracy of PMMH estimation of the smoothed additive functional, using SMC updates both with and without forward smoothing, under both the exact HMM and its ABC approximation.
Here we observe the expected pattern; the use of forward only smoothing in the PMMH update scheme significantly enhances estimative accuracy for roughly the same computational cost - the accuracy is better and the variance lower. When using forward smoothing in the SMC update mechanism, we further observe that the ABC HMM can be targeted with reasonable accuracy. Consider the effect of increasing $N$ on the errors in Figure \ref{fig:pmmh_smooth}. Interestingly, the improvement in estimation is more evident when using forward only smoothing, even though one expects a PMMH algorithm with more particles to mix better (see e.g.~\cite{andrieu_pseudo}) and thus the estimation to be most likely improved.

In terms of the estimation of parameters, we consider Figures \ref{fig:pmmh_sigx} and \ref{fig:pmmh_sigy}. Here, we are mainly concerned
with the quality of parameter estimation under the ABC HMM without forward smoothing - the forward smoothing cannot contribute anything to parameter estimation here.
The accuracy of the ABC is, in general quite biased by up-to 40\% of the parameter values. The variance is also quite substantial relative to the exact approach.

\begin{figure}[!h]\centering
\vspace{-5cm}
\includegraphics[width=14cm,height=20cm,angle=0]{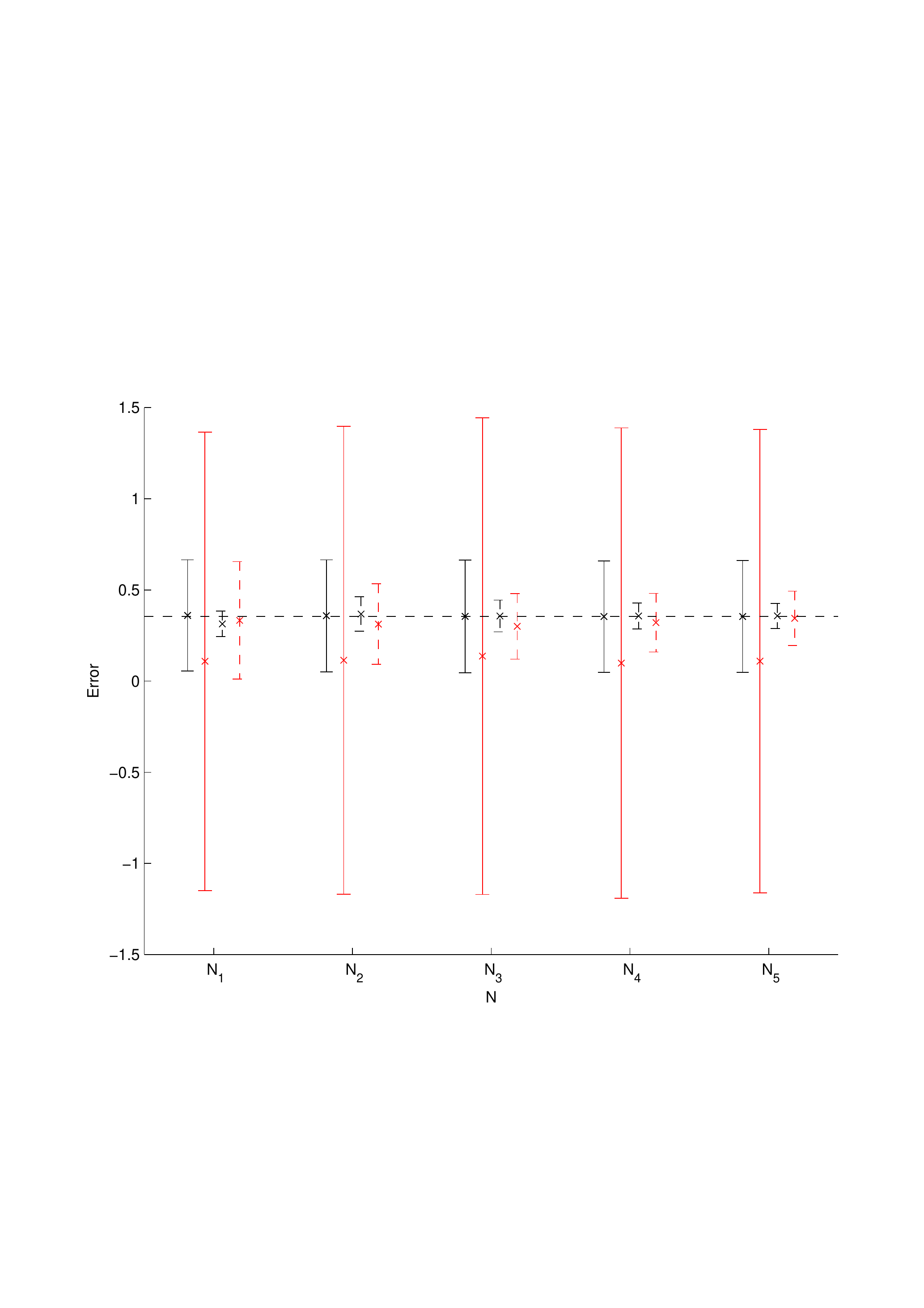}
\vspace{-4.5cm}
\caption{Standard errors of the smoothing errors associated with the estimates 
of the mean state obtained over 50 independent implementations. 
The PMCMC with exact SMC is in black, the ABC in red. The dotted lines indicate the usage of
forward only smoothing.
The dotted horizontal line is the estimated true value.}
\label{fig:pmmh_smooth}
\end{figure}

\begin{figure}[!h]\centering
\vspace{-5cm}
\includegraphics[width=14cm,height=20cm,angle=0]{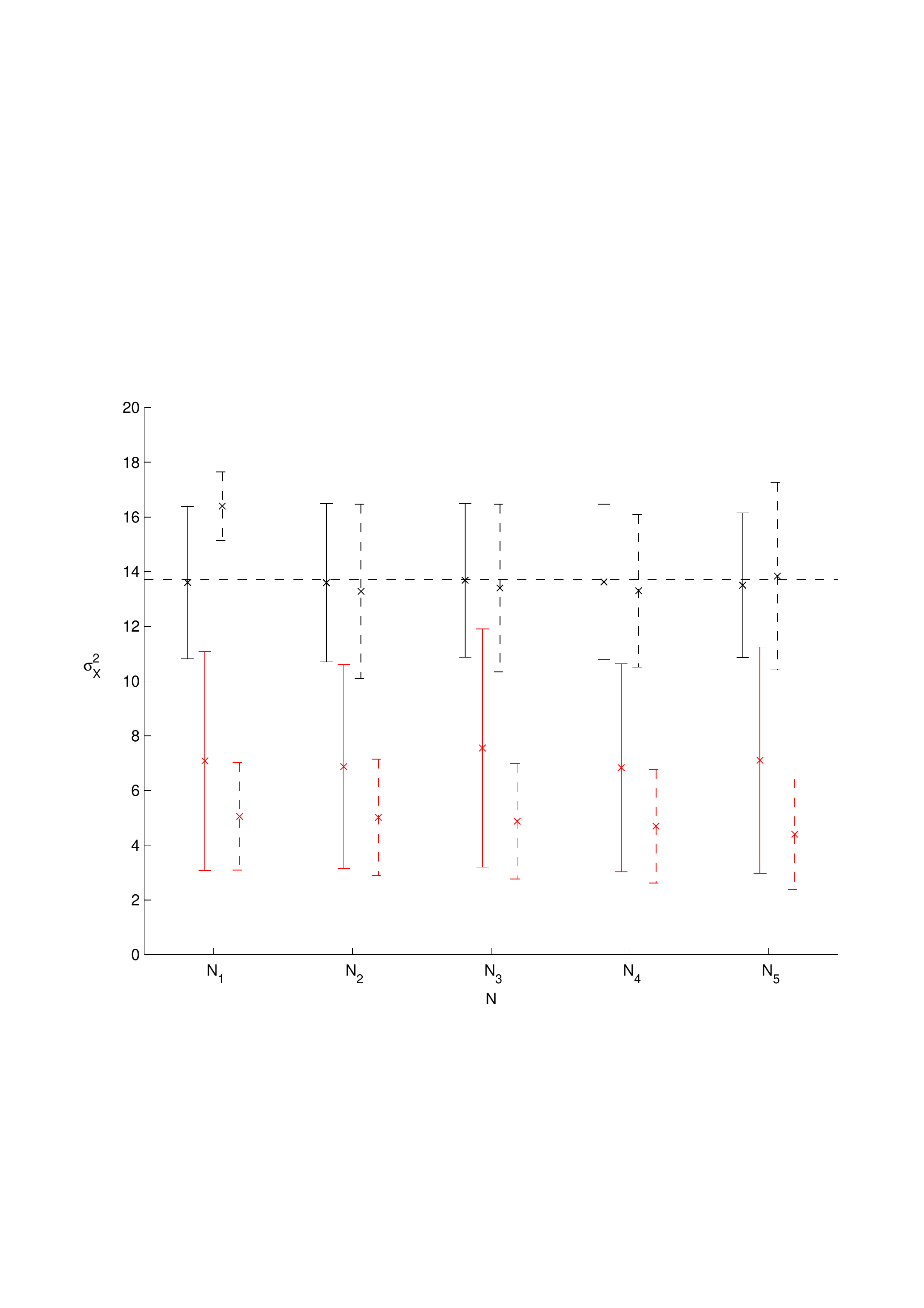}
\vspace{-4.5cm}
\caption{Standard errors of the smoothing errors associated with the estimates 
of $\sigma^2_X$ obtained over 50 independent implementations. 
The PMCMC with exact SMC is in black, the ABC in red. The dotted lines indicate the usage of
forward only smoothing.
The dotted horizontal line is the estimated true value.}
\label{fig:pmmh_sigx}
\end{figure}

\begin{figure}[!h]\centering
\vspace{-5cm}
\includegraphics[width=14cm,height=20cm,angle=0]{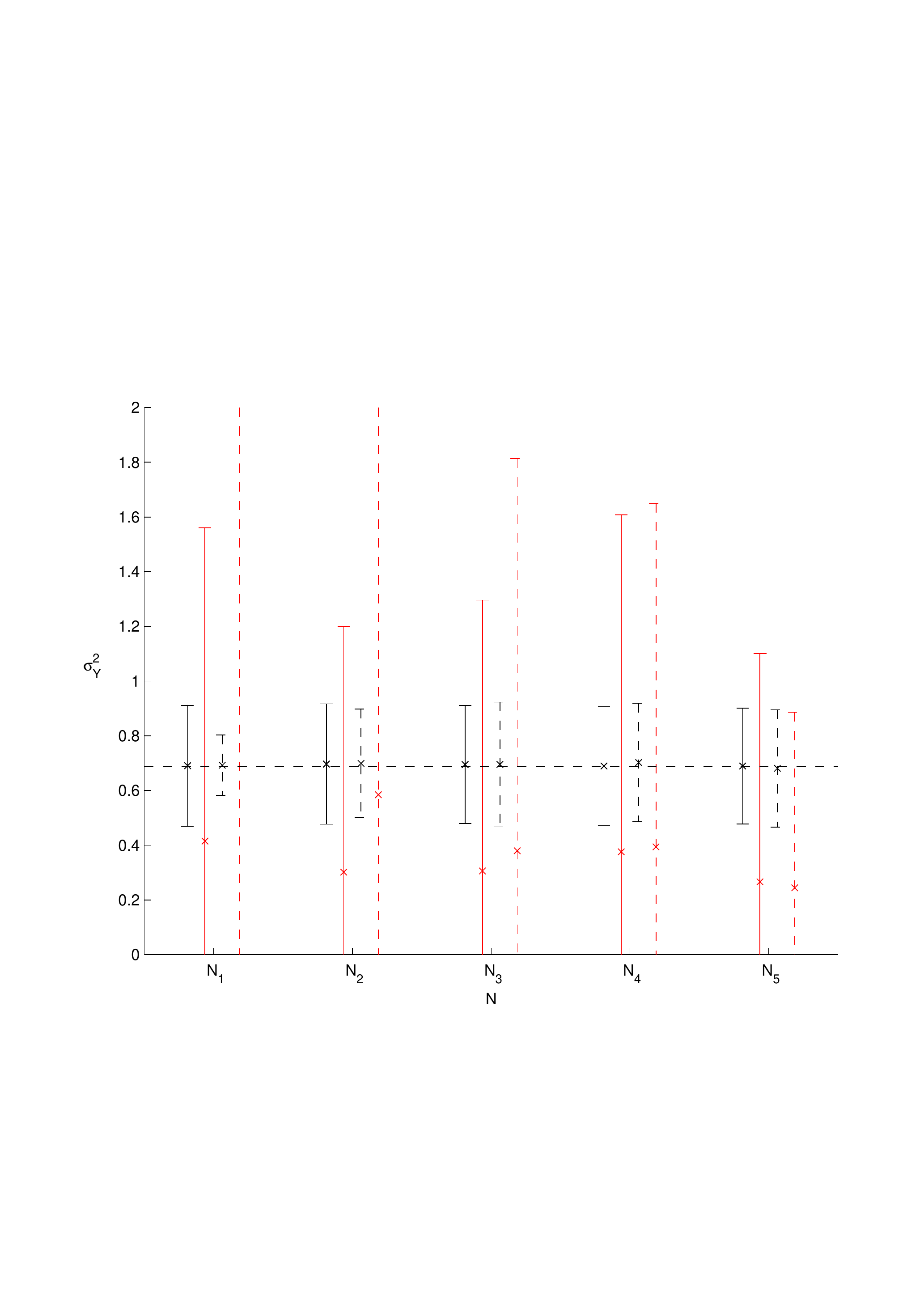}
\vspace{-4.5cm}
\caption{Standard errors of the smoothing errors associated with the estimates 
of $\sigma^2_Y$ obtained over 50 independent implementations. 
The PMCMC with exact SMC is in black, the ABC in red. The dotted lines indicate the usage of
forward only smoothing.
The dotted horizontal line is the estimated true value.}
\label{fig:pmmh_sigy}
\end{figure}

\subsection{Conclusions}

On the basis of our numerical study we can tentatively conclude the following. For smoothing:
\begin{itemize}
\item{In higher dimensions, ABC, in terms of accuracy, is competitive with using standard SMC (even if the model is analytically tractable);}
\item{For ABC with $\phi\big(\frac{u-y}{\epsilon}\big)$ specified as an indicator function, one would prefer to use the RSMC procedure over an SMC procedure with dynamic resampling.}
\end{itemize}
For the use of PMMH for performing batch parameter estimation, it would appear that, for moderate length time series, using forward only smoothing is not necessarily useful. If one is interested in 
the estimation of smoothed additive functionals, however, the use of forward smoothing can provide significant improvements (for the same computational cost) in estimative accuracy when compared to the PMMH procedure in \cite{andrieu}.
The ABC procedure produces parameter estimates which are perhaps more biased than estimation of smoothed additive functionals, but this is also linked to the
fact that the estimation method used is focussed on the latter quantities.
These conclusions, of course, cannot be comprehensive as they are model and quantity (w.r.t.~estimation) dependent.
However, we have seen similar trends in different examples or different parameter settings for the same model.

\section{Summary}\label{sec:summary}

In this article we have investigated smoothing and static parameter estimation for HMMs with intractable likelihoods. We have constructed SMC and PMCMC based-solutions for ABC approximations and investigated the bias 
associated to our procedure. There are several extensions to the work that has been considered here.
From the perspective of parameter estimation, we have only considered batch estimation by using PMCMC. In many practical problems, one is often interested in performing statistical inference as samples arrive online. We are currently investigating methodology
for this problem in \cite{ehrlich,yidrlim} and the theoretical and empirical work here
is of great relevance in these latter ideas; in particular when applying the online EM algorithm as considered in \cite{yidrlim}.
In this article we have focussed upon the forward-only smoothing technique in \cite{dds2}, however, this is not the only possibility;
one can also investigate the ideas in \cite{Briers_2010} in the context of ABC. In particular, the relative performance of these procedures is of interest.

\subsubsection*{Acknowledgements} 

The first author was supported by an EPSRC grant. The second author was supported by an MOE grant. The second and fourth authors acknowledge assistance from the LMS research in pairs grant. The second and third authors acknowledge assistance from an EPSRC platform grant EP/I019111/1.

\appendix

\section{SMC Error}\label{app:smc_bound}

We give an analysis of the self-normalized estimate that was not the objective in \cite{dds2} and is explicit in $\epsilon$. To that end, we introduce the following notations, to keep a consistent notation with \cite{dds1,dds2} on which our analysis relies. We set:
\begin{eqnarray*}
H((x,u),(x,u')) & = & g(x',u') f(x,x')\\
G_{n,\epsilon}(u) & = & \phi\bigg(\frac{u-y_n}{\epsilon}\bigg).
\end{eqnarray*}
To avoid notational overload, we will simply write $H(x,x')$, $x\in\mathbb{R}^{d_x}\times\mathbb{R}^{d_y}=:E$
and $G_{n,\epsilon}(x)$, $x\in\mathbb{R}^{d_x}\times\mathbb{R}^{d_y}$ (despite the independence of $G_{n,\epsilon}$ on only $\mathbb{R}^{d_x}$).

We consider the approximation of the path measure $\mathbb{Q}_{n,\epsilon}(\mathcal{V}_n)$, which is by definition:
$$
\mathbb{Q}_{n,\epsilon}(\mathcal{V}_n) := \frac{1}{\gamma_{n,\epsilon}(1)}\int_{E^{n+1}} 
\bigg[\prod_{p=0}^{n-1} G_{p,\epsilon}(x_p)\bigg] \mathcal{V}_n(x_{0:n}) \eta_0(x_0) \prod_{p=1}^n H(x_{p-1},x_p) dx_{0:p}
$$
where 
$$
\gamma_{n,\epsilon}(f)= \int_{E^{n+1}} 
\bigg[\prod_{p=0}^{n-1} G_{p,\epsilon}(x_p)\bigg]f(x_n) \eta_0(x_0) \prod_{p=1}^n H(x_{p-1},x_p) dx_{0:p} 
$$
and $f:E\rightarrow\mathbb{R}$. Recall for additive functionals $\mathcal{V}_n(x_{0:n})=\sum_{p=0}^n v_p(x_p)$
$$
V_{n,\epsilon}^N(x) = v_n(x) + \sum_{i=1}^N \frac{G_{n-1,\epsilon}(x_{n-1}^i)H_{n}(x_{n-1}^i,x)}
{\sum_{j=1}^N G_{n-1,\epsilon}(x_{n-1}^j)H_{n}(x_{n-1}^j,x)}V_{n-1,\epsilon}^N(x_{n-1}^i)
$$
with $V_{0,\epsilon}^N=v_0$; c.f.~\eqref{eq:v_update}. We remind the reader that $v_p$ is a function on $\mathbb{R}^{d_x}$ only.

\begin{theorem}\label{theo:smc_error}
Assume (A\ref{hyp:A}). Then for any $1\leq p <+\infty$, there exist a $a_p<+\infty$ such that for any $\epsilon>0$, $N\geq 1$, $n\geq 1$, $y_{0:n}$:
$$
\bigg\|\sum_{i=1}^N\frac{G_{n,\epsilon}(x_n^i)}{\sum_{j=1}^N G_{n,\epsilon}(x_n^j)} V_{n,\epsilon}^N(x_n^i)
 - \mathbb{E}^{\epsilon}[\mathcal{V}_n(X_{0:n})|y_{0:n}] \bigg\|_p \leq \frac{a_p(n+1)\overline{\delta}(\epsilon)}{\sqrt{N}}
$$
where $\overline{\delta}(\epsilon)=\max\{\delta(\epsilon)\max\{\delta(\epsilon)^2,\delta(\epsilon)^4\}, \delta(\epsilon)^{3}\}$, with $\delta(\epsilon)$ as in (A\ref{hyp:A}) (\ref{hyp:smooth_abc}) and $\mathbb{E}^{\epsilon}[\cdot|y_{0:n}]$ is the expectation w.r.t.~the joint ABC smoothing distribution.
\end{theorem}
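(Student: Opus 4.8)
The plan is to reduce the self-normalized forward-only smoothing estimate to the unnormalized path-measure quantity $\mathbb{Q}_{n,\epsilon}(\mathcal{V}_n)$ and then invoke the $\mathbb{L}_p$-bounds for the forward-only smoother established in \cite{dds1,dds2}, carefully tracking the dependence on the incremental weights through the quantity $\delta(\epsilon)$ from (A\ref{hyp:A})(\ref{hyp:smooth_abc}). First I would note that, because we resample multinomially at every step, the estimate $\sum_{i=1}^N \frac{G_{n,\epsilon}(x_n^i)}{\sum_{j=1}^N G_{n,\epsilon}(x_n^j)} V_{n,\epsilon}^N(x_n^i)$ is exactly the standard self-normalized SMC approximation of $\eta_{n,\epsilon}(V_{n,\epsilon})/\eta_{n,\epsilon}(1)$ built from the Feynman--Kac model with potentials $G_{p,\epsilon}$ and mutation $H$, where $V_{n,\epsilon}$ solves the backward recursion given just before the theorem statement and $V_{n,\epsilon}^N$ is its particle approximation \eqref{eq:v_update}. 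The target of this estimate is precisely $\mathbb{E}^{\epsilon}[\mathcal{V}_n(X_{0:n})|y_{0:n}]$, which is a consequence of the forward-smoothing identity (the $V_n$/$V_{n,\epsilon}$ decomposition) recalled in Section \ref{sec:fos}.

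Next I would write the self-normalized error as the usual ratio decomposition: with $\gamma_{n,\epsilon}^N$ the unnormalized particle measure and $\eta_{n,\epsilon}^N = \gamma_{n,\epsilon}^N/\gamma_{n,\epsilon}^N(1)$, one has
$$
\eta_{n,\epsilon}^N(V_{n,\epsilon}^N) - \eta_{n,\epsilon}(V_{n,\epsilon}) = \frac{1}{\eta_{n,\epsilon}^N(1)}\Big[\big(\gamma_{n,\epsilon}^N(V_{n,\epsilon}^N)/\gamma_{n,\epsilon}^N(1_\text{ref}) - \eta_{n,\epsilon}(V_{n,\epsilon})\big) - \eta_{n,\epsilon}(V_{n,\epsilon})\big(\eta_{n,\epsilon}^N(1)-1\big)\Big],
$$
so that after applying Minkowski and Cauchy--Schwarz in $\mathbb{L}_p$ it suffices to bound three pieces: (i) the $\mathbb{L}_p$ fluctuation of the particle path measure applied to the \emph{exact} function $V_{n,\epsilon}$, (ii) the $\mathbb{L}_p$ fluctuation of $\eta_{n,\epsilon}^N(1)$ around $1$, and (iii) the propagated error $\|V_{n,\epsilon}^N - V_{n,\epsilon}\|$ coming from \eqref{eq:v_update}. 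Under (A\ref{hyp:A})(\ref{hyp:density_control})--(\ref{hyp:smooth_abc}) the potentials satisfy $\underline\alpha(\epsilon)\le G_{p,\epsilon}\le\overline\alpha(\epsilon)$ and $H$ is uniformly bounded above and below by $\rho$-dependent constants, so the mixing/stability constants entering the Del Moral-type bounds for pieces (i)--(ii) are controlled by powers of $\delta(\epsilon)=\overline\alpha(\epsilon)/\underline\alpha(\epsilon)$; the linear-in-$(n+1)$ growth for additive functionals is exactly the content of the CLT/non-asymptotic analysis in \cite{dds1}. For piece (iii), the recursion \eqref{eq:v_update} is a ratio of particle averages whose numerator and denominator both involve $G_{n-1,\epsilon}H$, and a telescoping/induction argument over $n$ — again stabilized by the lower and upper bounds on $G$ and $H$ — gives a bound that accumulates at rate $(n+1)/\sqrt N$ with the same $\delta(\epsilon)$-dependence, the worst power being captured by $\overline\delta(\epsilon)=\max\{\delta(\epsilon)\max\{\delta(\epsilon)^2,\delta(\epsilon)^4\},\delta(\epsilon)^3\}$.

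The main obstacle will be piece (iii): making explicit how the error in the recursively-defined $V_{n,\epsilon}^N$ depends jointly on $\epsilon$, $N$ and $n$, since the recursion couples the smoothing-functional error at time $n$ to the filtering error at time $n-1$ through a self-normalized ratio with kernel $G_{n-1,\epsilon}H$. Here I would follow the strategy of \cite{dds2} — decompose $V_{n,\epsilon}^N(x) - V_{n,\epsilon}(x)$ into a term where only the current particle weights fluctuate (handled by a conditional $\mathbb{L}_p$ bound given the time-$(n-1)$ particles, using the uniform bounds on $G_{n-1,\epsilon},H$ to control the ratio) plus a term $\eta_{n-1,\epsilon}^N(\cdot) - \eta_{n-1,\epsilon}(\cdot)$ that is exactly the error one step earlier — and then close the induction, carefully bookkeeping that each recursion step multiplies the constant by at most a bounded power of $\delta(\epsilon)$ while adding at most $a_p/\sqrt N$ to the fluctuation, which yields the stated $a_p(n+1)\overline\delta(\epsilon)/\sqrt N$ bound after collecting terms and taking $a_p$ large enough to absorb the $\rho$- and moment-dependent constants.
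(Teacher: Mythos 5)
Your overall architecture is reasonable and shares the paper's starting point: reduce the self-normalized estimate to unnormalized quantities, then lean on the $\mathbb{L}_p$ analysis of the forward-only smoother in \cite{dds1,dds2} while tracking the $\delta(\epsilon)$-dependence of the potentials. The paper's actual decomposition is slightly different from your three-piece split: it writes the error as a normalization-fluctuation term, $\big(\eta_{n,\epsilon}(G_{n,\epsilon})-\tfrac1N\sum_j G_{n,\epsilon}(x_n^j)\big)$ multiplied by a prefactor controlled via the elementary induction bound $V_{n,\epsilon}^N(x)\le\sum_{p=0}^n\|v_p\|$ and Theorem 7.4.4 of \cite{delmoral}, plus the single unnormalized term $\tfrac1N\sum_i G_{n,\epsilon}(x_n^i)V_{n,\epsilon}^N(x_n^i)-\mathbb{Q}_{n,\epsilon}(G_{n,\epsilon}\mathcal{V}_n)$. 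Crucially, the paper never compares $V_{n,\epsilon}^N$ to an exact $V_{n,\epsilon}$ pointwise; instead it represents the particle quantity through the operators $D_{p,n,\epsilon}^N$ built from the particle backward kernels $M_{q,\eta_{q-1}^N,\epsilon}$ and bounds the oscillations of $P_{p,n,\epsilon}^N(G_{n,\epsilon}\mathcal{V}_n)$.

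The genuine gap in your proposal is your piece (iii). You correctly identify it as the main obstacle, but the induction you sketch does not deliver the claimed rate: at each step of \eqref{eq:v_update} the local error is a self-normalized fluctuation whose prefactor involves $\|V_{n-1,\epsilon}^N\|=\mathcal{O}(n)$, and the propagated error from time $n-1$ is multiplied by a kernel whose naive operator norm involves a power of $\delta(\epsilon)\rho^{c}>1$. Without further structure this yields either a constant growing like $\delta(\epsilon)^{cn}$ or an error of order $n^2/\sqrt N$, not $(n+1)\overline{\delta}(\epsilon)/\sqrt N$. The mechanism that rescues the linear rate with a \emph{fixed} power of $\delta(\epsilon)$ is the geometric decay of the Dobrushin coefficients $\beta(M_{p,\eta_{p-1}^N,\epsilon}\cdots M_{q+1,\eta_q^N,\epsilon})\le(1-\rho^{-8})^{p-q}$ and $\beta(S_{p,q,\epsilon})\le(1-\delta(\epsilon)^{-1}\rho^{-8})^{q-p}$, which makes the double sums over $(p,q)$ collapse to $\mathcal{O}(n+1)$ times $\overline{\alpha}(\epsilon)\max\{\delta(\epsilon)^2,\delta(\epsilon)^4\}$; this is exactly the content of the paper's Lemma \ref{lem:updated_lp_bound}, adapted from Theorem 3.2 of \cite{dds1}. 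Your writeup asserts the conclusion of this step ("carefully bookkeeping\dots yields the stated bound") but supplies no contraction argument, so as written the proof does not close.
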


\begin{proof}
Consider the decomposition:
$$
\sum_{i=1}^N\frac{G_{n,\epsilon}(x_n^i)}{\sum_{j=1}^N G_{n,\epsilon}(x_n^j)} V_{n,\epsilon}^N(x_n^i) - \frac{\mathbb{Q}_{n,\epsilon}(G_{n,\epsilon} \mathcal{V}_n)}{\eta_{n,\epsilon}(G_{n,\epsilon})} =  
$$
$$
\frac{\frac{1}{N}\sum_{i=1}^N G_{n,\epsilon}(x_n^i)V_{n,\epsilon}^N(x_n^i)}{\eta_{n,\epsilon}(G_{n,\epsilon})\frac{1}{N}\sum_{j=1}^N G_{n,\epsilon}(x_n^j)}\bigg(\eta_{n,\epsilon}(G_{n,\epsilon}) -\frac{1}{N}\sum_{j=1}^N G_{n,\epsilon}(x_{n}^j)\bigg) 
$$
\begin{equation}
+ \frac{1}{\eta_{n,\epsilon}(G_{n,\epsilon})}
\bigg(\frac{1}{N}\sum_{i=1}^N G_{n,\epsilon}(x_n^i)V_{n,\epsilon}^N(x_n^i) - \mathbb{Q}_{n,\epsilon}(G_{n,\epsilon} \mathcal{V}_n)\bigg)
\label{eq:decomp}
\end{equation}
where we recall the normalized $n-$time marginal $\eta_{n,\epsilon}(f) = \gamma_{n,\epsilon}(f)/\gamma_{n,\epsilon}(1)$. Note that 
\begin{eqnarray*}
\frac{\mathbb{Q}_{n,\epsilon}(G_{n,\epsilon} \mathcal{V}_n)}{\eta_{n,\epsilon}(G_{n,\epsilon})} & = & \frac{1}{\gamma_{n,\epsilon}(G_{n,\epsilon})}\int_{E^{n+1}} 
\bigg[\prod_{p=0}^{n} G_{p,\epsilon}(x_p)\bigg] \mathcal{V}_n(x_{0:n}) \eta_0(x_0) \prod_{p=1}^n H(x_{p-1},x_p)dx_{0:p}\\
& = &  \mathbb{E}^{\epsilon}[\mathcal{V}_n(X_{0:n})|y_{0:n}]
\end{eqnarray*}
which is the quantity of interest.

To consider an $\mathbb{L}_p-$analysis, we can split the two terms in \eqref{eq:decomp}
via Minkowski. We consider the first term:
$$
\bigg\|\frac{\frac{1}{N}\sum_{i=1}^N G_{n,\epsilon}(x_n^i)V_{n,\epsilon}^N(x_n^i)}{\eta_{n,\epsilon}(G_{n,\epsilon})
\frac{1}{N}\sum_{j=1}^N G_{n,\epsilon}(x_n^j)}\bigg(\eta_{n,\epsilon}(G_{n,\epsilon}) -\frac{1}{N}\sum_{j=1}^N G_{n,\epsilon}(x_n^j)\bigg)\bigg\|_p
$$
Now, one has: $V_{n,\epsilon}^N(x_n^i)\leq \sum_{p=0}^n \|v_p\|$. This is proved by induction, the initialization with $n=0$ being obvious, assuming for $n-1$, one has:
$$
V_{n,\epsilon}^N(x) = v_n(x) + \sum_{i=1}^N \frac{G_{n-1,\epsilon}(x_{n-1}^i)H_{n}(x_{n-1}^i,x)}
{\sum_{j=1}^N G_{n-1,\epsilon}(x_{n-1}^j)H_{n}(x_{n-1}^j,x)}V_{n-1,\epsilon}^N(x_{n-1}^i) \leq
v_n(x) + \sum_{p=0}^{n-1} \|v_p\|
$$
and one easily concludes. Thus, it follows:
$$
\bigg\|\frac{\frac{1}{N}\sum_{i=1}^N G_{n,\epsilon}(x_n^i)V_{n,\epsilon}^N(x_n^i)}{\eta_{n,\epsilon}(G_{n,\epsilon})
\frac{1}{N}\sum_{j=1}^N G_{n,\epsilon}(x_n^j)}\bigg(\eta_{n,\epsilon}(G_{n,\epsilon}) -\sum_{j=1}^N G_{n,\epsilon}(x_n^j)\bigg)\bigg\|_p
$$
$$
\leq (n+1)\overline{v}\underline{\alpha}(\epsilon)^{-1}\bigg\|\eta_{n,\epsilon}(G_{n,\epsilon}) -\frac{1}{N}\sum_{j=1}^N G_{n,\epsilon}(x_n^j) \bigg\|_p.
$$
By Theorem 7.4.4 of \cite{delmoral} that there exist some
$a_p<\infty$ such that
$$
\bigg\|\bigg(\eta_{n,\epsilon}(G_{n,\epsilon}) -\frac{1}{N}\sum_{j=1}^N G_{n,\epsilon}(x_n^j)\bigg)\bigg\|_p \leq \frac{a_p\delta(\epsilon)^2\overline{\alpha}(\epsilon)}{\sqrt{N}}
$$
hence
$$
\bigg\|\frac{\frac{1}{N}\sum_{i=1}^N G_{n,\epsilon}(x_n^i)V_{n,\epsilon}^N(x_n^i)}{\eta_{n,\epsilon}(G_{n,\epsilon})
\frac{1}{N}\sum_{j=1}^N G_{n,\epsilon}(x_n^j)}\bigg(\eta_{n,\epsilon}(G_{n,\epsilon}) -\frac{1}{N}\sum_{j=1}^N G_{n,\epsilon}(x_n^j)\bigg)\bigg\|_p \leq 
\frac{a_p (n+1)\delta(\epsilon)^3}{\sqrt{N}}
$$
for some $a_p<\infty$ that does not depend upon $n$ or $\epsilon$.
To deal with the second term in  \eqref{eq:decomp} one can use Lemma \ref{lem:updated_lp_bound} along with (A\ref{hyp:A}) (\ref{hyp:smooth_abc}) to conclude.
\end{proof}

\subsection{Technical Result}

We provide a proof of Lemma \ref{lem:updated_lp_bound}.
To that end, introduce the operator:
$$
D_{p,n,\epsilon}^N(V_n)(x_p) = \int \mathcal{M}_{p,\epsilon}^N(x_p,d_{0:p-1})\mathcal{Q}_{p,n,\epsilon}
(x_p,dx_{p:n})V_n(x_n)
$$ 
where 
\begin{eqnarray*}
\mathcal{M}_{p,\epsilon}^N(x_p,dx_{0:p-1}) & =  & \prod_{q=1}^p M_{q,\eta_{q-1}^N,\epsilon}(x_q,dx_{q-1})\\
\mathcal{Q}_{p,n,\epsilon}(x_p,dx_{p:n}) & = & \prod_{q=p+1}^n Q_{q,\epsilon}(x_{q-1},dx_q)\\
M_{q,\eta_{q-1}^N,\epsilon}(x_q,dx_{q-1}) & = & \frac{\eta_{q-1}^N(dx_{q-1})G_{q-1,\epsilon}(x_{q-1})
H_q(x_{q-1},x_q)}{\eta_{q-1}^N(G_{q-1,\epsilon}H_q(\cdot,x_q))}\\
Q_{q,\epsilon}(x_{q-1},dx_q) & = & G_{q-1,\epsilon}(x_{q-1})H_q(x_{q-1},x_q)dx_q\\
\eta_{q-1}^N(dx_{q-1}) & = & \frac{1}{N}\sum_{i=1}^N \delta_{x_{q-1}^i}(dx_{q-1})
\end{eqnarray*}
where all the conventions of \cite{dds1} are preserved. In the empirical measure in the final line, one considers the mutated particles.
We also use the convention $Q_{p,n,\epsilon}=Q_{p+1,\epsilon}\dots Q_{n,\epsilon}$. Note for the backward kernel, when considering the filter $\hat{\eta}_{q-1,\epsilon}$ one can write
\begin{equation*}
M_{q,\hat{\eta}_{q-1,\epsilon},\epsilon}(x_q,dx_{q-1}) = 
\frac{\hat{\eta}_{q-1,\epsilon}(dx_{q-1})H_q(x_{q-1},x_q)}{\hat{\eta}_{q-1,\epsilon}(H_q(\cdot,x_q))}.
\end{equation*}

\begin{lem}\label{lem:updated_lp_bound}
Assume (A\ref{hyp:A}). Then for any $1\leq p <+\infty$ there exist a $a_p<+\infty$ such that for any $\epsilon>0$, $N\geq 1$, $n\geq 1$, $y_{0:n}$:
$$
\bigg\|\frac{1}{N}\sum_{i=1}^N G_{n,\epsilon}(x_n^i)V_{n,\epsilon}^N(x_n^i) - \mathbb{Q}_{n,\epsilon}(G_{n,\epsilon} \mathcal{V}_n)\bigg\|_p \leq \frac{a_p (n+1)\widetilde{\delta}(\epsilon)\overline{\alpha}(\epsilon)}{\sqrt{N}}
$$
where $\widetilde{\delta}(\epsilon)=\max\{\delta(\epsilon)^2, \delta(\epsilon)^{4}\}$, with $\delta(\epsilon),\overline{\alpha}(\epsilon)$ as in (A\ref{hyp:A}) (\ref{hyp:smooth_abc}).
\end{lem}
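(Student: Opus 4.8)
The plan is to carry out the $\mathbb{L}_p$ ``many-body Feynman--Kac'' analysis of the forward-only smoother from \cite{dds1,dds2}, while keeping explicit track of the dependence on $\epsilon$ that enters through the bounds $\underline{\alpha}(\epsilon),\overline{\alpha}(\epsilon)$ on the ABC kernel.

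First I would set up the telescoping decomposition over the time index. Using the recursion for $V_{n,\epsilon}^N$ together with the operators $D_{p,n,\epsilon}^N$, $\mathcal{M}_{p,\epsilon}^N$, $\mathcal{Q}_{p,n,\epsilon}$ introduced above, one writes
$$
\frac{1}{N}\sum_{i=1}^N G_{n,\epsilon}(x_n^i)V_{n,\epsilon}^N(x_n^i) - \mathbb{Q}_{n,\epsilon}(G_{n,\epsilon}\mathcal{V}_n) = \sum_{p=0}^n \left[\eta_p^N - \Phi_{p,\epsilon}(\eta_{p-1}^N)\right]\!\big(T_{p,n,\epsilon}^N\big),
$$
where $\Phi_{p,\epsilon}$ is the one-step (weighted mutation) update map, $\eta_p^N$ is the empirical measure of the mutated particles at time $p$, and $T_{p,n,\epsilon}^N$ is a random, $\mathcal{F}_{p-1}$-measurable test function built from the $G_{\cdot,\epsilon}$-weighted $D_{p,n,\epsilon}^N(V_n)$ divided by the relevant normalising constants. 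The virtue of this decomposition is that, conditionally on the past $\mathcal{F}_{p-1}$, the $N$ particles at time $p$ are i.i.d.\ and $\mathbb{E}[\eta_p^N(f)\mid\mathcal{F}_{p-1}] = \Phi_{p,\epsilon}(\eta_{p-1}^N)(f)$, so each summand is a martingale increment; that $\mathcal{Q}_{p,n,\epsilon}$ is deterministic is what makes this clean.

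Second I would estimate the $\mathbb{L}_p$ norm term by term. Minkowski's inequality splits the norm into $n+1$ pieces, producing the explicit factor $(n+1)$, and on each piece the conditional Marcinkiewicz--Zygmund (Burkholder) inequality gives a bound of the form $a_p\|T_{p,n,\epsilon}^N\|/\sqrt{N}$. It then remains to bound $\|T_{p,n,\epsilon}^N\|$ uniformly in $p$, $n$ and in the particle configuration. Here (A\ref{hyp:A}) does the work: the two-sided bounds on $g$ and $f$ in (A\ref{hyp:A})(\ref{hyp:density_control}) make the forward kernel $H$ uniformly mixing, so the backward kernels $M_{q,\eta_{q-1}^N,\epsilon}$ contract geometrically, with a rate that is degraded by the $G_{\cdot,\epsilon}$-weights by a factor of order $\delta(\epsilon)$; and (A\ref{hyp:A})(\ref{hyp:smooth_abc}) gives $\underline{\alpha}(\epsilon)\le G_{q,\epsilon}\le\overline{\alpha}(\epsilon)$, so every normalising constant appearing in $D_{p,n,\epsilon}^N$ and in $T_{p,n,\epsilon}^N$ is controlled by ratios $\overline{\alpha}(\epsilon)/\underline{\alpha}(\epsilon)=\delta(\epsilon)$ (the fixed constant $\rho$ being absorbed into $a_p$). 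Because $\mathcal{V}_n=\sum_{q=0}^n v_q(x_q)$ is additive with $\sup_q\|v_q\|=\overline{v}<\infty$ by (A\ref{hyp:A})(\ref{hyp:function}), the contribution of each $v_q$ to $T_{p,n,\epsilon}^N$ is damped geometrically in $|p-q|$; summing the geometric series (whose effective length scales with $\delta(\epsilon)$) leaves a bound for $\|T_{p,n,\epsilon}^N\|$ that is uniform in $n$ and of order $\widetilde{\delta}(\epsilon)\overline{\alpha}(\epsilon)$, the single factor $\overline{\alpha}(\epsilon)$ coming from the leading weight $G_{n,\epsilon}$ and the powers of $\delta(\epsilon)$ collected into $\widetilde{\delta}(\epsilon)=\max\{\delta(\epsilon)^2,\delta(\epsilon)^4\}$ from the normalising ratios together with the geometric-series length. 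Combining with $(n+1)$ from Minkowski and $1/\sqrt{N}$ from Marcinkiewicz--Zygmund yields the claim.

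The main obstacle is precisely that last point: showing that $\|T_{p,n,\epsilon}^N\|$ does not grow with $n$, i.e.\ that the sampling error injected at time $p$ into the additive functional $\sum_{q=0}^n v_q(x_q)$ is $\mathcal{O}(1)$ rather than $\mathcal{O}(n)$. Naively $D_{p,n,\epsilon}^N(V_n)=\mathcal{O}((n+1)\overline{v})$, exactly as in the crude $\mathbb{L}_\infty$ bound on $V_{n,\epsilon}^N$ established in the proof of Theorem \ref{theo:smc_error}, and using that would produce a spurious extra factor $(n+1)$. As in \cite{dds1}, one must instead not bound $D_{p,n,\epsilon}^N(V_n)$ by its sup norm but exploit that only the \emph{fluctuation} part of each $v_q$-contribution survives, and that this fluctuation is forgotten geometrically by the weighted Feynman--Kac flow. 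Propagating the $\epsilon$-dependence through this forgetting argument --- in particular tracking how the mixing rate, and hence the summed geometric series, scales with $\delta(\epsilon)$ --- is the delicate bookkeeping, but it is an otherwise routine adaptation of the stability estimates in \cite{dds1,dds2}.
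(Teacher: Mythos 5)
Your proposal follows essentially the same route as the paper: both reduce the problem to the local-error (telescoping/martingale-increment) decomposition and conditional Khintchine--Marcinkiewicz--Zygmund bound of Theorem 3.2 of \cite{dds1}, with the real work being the uniform-in-$n$ control of the oscillation of the $G_{n,\epsilon}$-weighted test functions $P_{p,n,\epsilon}^N(G_{n,\epsilon}\mathcal{V}_n)$ via the Dobrushin coefficients of the backward kernels and the semigroups $S_{p,q,\epsilon}$, tracking the $\overline{\alpha}(\epsilon)$ and $\delta(\epsilon)$ factors exactly as you describe. You also correctly identify the pitfall the paper avoids, namely that the crude sup-norm bound $\|D_{p,n,\epsilon}^N(V_n)\|=\mathcal{O}((n+1)\overline{v})$ would yield a spurious extra factor of $n$.
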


\begin{proof}
The proof of this result follows the proof of Theorem 3.2 of \cite{dds1}. The complications are the control of the oscillations of
$P_{p,n,\epsilon}^N(G_{n,\epsilon}\mathcal{V}_n)=D_{p,n,\epsilon}^N(G_{n,\epsilon}\mathcal{V}_n)/D_{p,n}^N(1)$ with a different function as well as obtaining a rate w.r.t.~$\epsilon$. It is remarked that the application of the Kintchine-type inequality in Theorem 3.2 of \cite{dds1} will not add any dependence upon $\epsilon$.

Using the definition of $D_{p,n,\epsilon}^N$ one has
$$
D_{p,n,\epsilon}^N(G_{n,\epsilon}\mathcal{V}_n) = Q_{p,n,\epsilon}(G_{n,\epsilon})\sum_{q=0}^p M_{p,\eta_{p-1}^N,\epsilon}\dots M_{q+1,\eta_{q}^N,\epsilon}(v_q) + \sum_{q=p+1}^n Q_{p.q,\epsilon}(v_q Q_{q,n,\epsilon}(G_{n,\epsilon})).
$$
Thus it follows that:
\begin{equation}
P_{p,n,\epsilon}^N(G_{n,\epsilon}\mathcal{V}_n) = \frac{Q_{p,n,\epsilon}(G_{n,\epsilon})}{Q_{p,n,\epsilon}(1)} \sum_{q=0}^{p-1} M_{p,\eta_{p-1}^N,\epsilon}\dots M_{q+1,\eta_{q}^N,\epsilon}(v_q) + \sum_{q=p}^n \frac{S_{p,q,\epsilon}(\overline{Q}_{q,n,\epsilon}(G_{n,\epsilon})v_q)}{S_{p,q,\epsilon}(\overline{Q}_{q,n,\epsilon}(1))}
\label{eq:p_decomp}
\end{equation}
where $S_{p,q,\epsilon}(v) = Q_{p,q,\epsilon}(v)/Q_{p,q,\epsilon}(1)$ and $\overline{Q}_{q,n,\epsilon}(v)=Q_{q,n,\epsilon}(v)/\eta_{q,\epsilon}(Q_{q,n,\epsilon}(1))$. To deal with oscillations of $P_{p,n,\epsilon}^N(G_{n,\epsilon}\mathcal{V}_n)$ we consider both sums separately.

We being with the first term on the R.H.S.~of\eqref{eq:p_decomp}. In particular the difference with arguments $x$ and $y$:
$$
\frac{Q_{p,n,\epsilon}(G_{n,\epsilon})(x)}{Q_{p,n,\epsilon}(1)(x)} \sum_{q=0}^{p-1} M_{p,\eta_{p-1}^N,\epsilon}\dots M_{q+1,\eta_{q}^N,\epsilon}(v_q)(x) - 
\frac{Q_{p,n,\epsilon}(G_{n,\epsilon})(y)}{Q_{p,n,\epsilon}(1)(y)} \sum_{q=0}^{p-1} M_{p,\eta_{p-1}^N,\epsilon}\dots M_{q+1,\eta_{q}^N,\epsilon}(v_q)(y).
$$
Then, treating the summands, one has
$$
\frac{Q_{p,n,\epsilon}(G_{n,\epsilon})(x)}{Q_{p,n,\epsilon}(1)(x)}\bigg[M_{p,\eta_{p-1}^N,\epsilon}\dots M_{q+1,\eta_{q}^N,\epsilon}(v_q)(x) -M_{q+1,\eta_{q}^N,\epsilon}(v_q)(y)\bigg] +
$$
$$
M_{q+1,\eta_{q}^N,\epsilon}(v_q)(y)\bigg[S_{p,n,\epsilon}(G_{n,\epsilon})(x) - S_{p,n,\epsilon}(G_{n,\epsilon})(y)\bigg]
$$
which is clearly upper-bounded by
\begin{equation}
2\|G_{n,\epsilon}\|\|v_q\|[\beta(M_{p,\eta_{p-1}^N,\epsilon}\dots M_{q+1,\eta_{q}^N,\epsilon}) + \beta(S_{p,n,\epsilon})].
\label{eq:bound1}
\end{equation}

Now consider the second term on the R.H.S.~of \eqref{eq:p_decomp}. In particular, for each summand, one has after subtracting the function in $y$ from that in $x$
$$
\frac{S_{p,q,\epsilon}(\overline{Q}_{q,n,\epsilon}(G_{n,\epsilon})v_q)(x) - S_{p,q,\epsilon}(\overline{Q}_{q,n,\epsilon}(G_{n,\epsilon})v_q)(y)}
{S_{p,q,\epsilon}(\overline{Q}_{q,n,\epsilon}(1)(x)} 
$$
$$
+ S_{p,q,\epsilon}(\overline{Q}_{q,n,\epsilon}(G_{n,\epsilon})v_q)(y)
\bigg[\frac{S_{p,q,\epsilon}(\overline{Q}_{q,n,\epsilon}(1)(y)-S_{p,q,\epsilon}(\overline{Q}_{q,n,\epsilon}(1)(x)}{S_{p,q,\epsilon}(\overline{Q}_{q,n,\epsilon}(1)(y)
S_{p,q,\epsilon}(\overline{Q}_{q,n,\epsilon}(1)(x)}\bigg]
$$
Dealing with the two terms separately, one can easily show that the first term is upper-bounded by
$
2b_{q,n,\epsilon}^2 \|v_q\|\|G_{n,\epsilon}\|\beta(S_{p,q,\epsilon})
$
where $b_{q,n,\epsilon}=\sup_{x,y}Q_{p,n,\epsilon}(1)(x)/Q_{p,n,\epsilon}(1)(y)$. Similarly using trivial manipulations, one can also show that the second term is upper-bounded by same term, yielding the upper-bound 
\begin{equation}
4b_{q,n,\epsilon}^2 \|v_q\|\|G_{n,\epsilon}\|\beta(S_{p,q,\epsilon})\label{eq:bound2}
\end{equation}

Combining the bounds \eqref{eq:bound1}-\eqref{eq:bound2} one can deduce that:
\begin{eqnarray*}
\textrm{Osc}(P_{p,n,\epsilon}^N(G_{n,\epsilon}\mathcal{V}_n)) & \leq & 2\|G_{n,\epsilon}\|\sum_{q=0}^{p-1}\|v_q\|[\beta(M_{p,\eta_{p-1}^N,\epsilon}\dots M_{q+1,\eta_{q}^N,\epsilon}) + \beta(S_{p,n,\epsilon})] \\ & & + 
4\|G_{n,\epsilon}\|\sum_{q=0}^{p-1}\|v_q\|b_{q,n,\epsilon}^2 \beta(S_{p,q,\epsilon}).
\end{eqnarray*}
Thus, one has, via the proof of Theorem 3.2 of \cite{dds1}:
$$
\bigg\|\frac{1}{N}\sum_{i=1}^N G_{n,\epsilon}(x_n^i)V_{n,\epsilon}^N(x_n^i) - \mathbb{Q}_{n,\epsilon}(G_{n,\epsilon} \mathcal{V}_n)\bigg\|_p \leq a_p \sum_{p=0}^n b_{p,n,\epsilon}^2 c_{p,n,\epsilon}^N
$$
where
$$
c_{p,n,\epsilon}^N = \mathbb{E}\bigg[2\|G_{n,\epsilon}\|\sum_{q=0}^{p-1}\|v_q\|[\beta(M_{p,\eta_{p-1}^N,\epsilon}\dots M_{q+1,\eta_{q}^N,\epsilon}) + \beta(S_{p,n,\epsilon})] + 
4\|G_{n,\epsilon}\|\sum_{q=0}^{p-1}\|v_q\|b_{q,n,\epsilon}^2 \beta(S_{p,q,\epsilon})\bigg].
$$

To complete the proof we need to consider the term $c_{p,n,\epsilon}^N$, to that end, we quote the following bounds which follow from (A\ref{hyp:A}); see \cite{dds1} and the citations therein for details:
$$
b_{p,n,\epsilon} \leq \delta(\epsilon)\rho^4 \quad \beta(S_{p,q,\epsilon}) \leq (1-\delta(\epsilon)^{-1}\rho^{-8})^{q-p} \quad \beta(M_{p,\eta_{p-1}^N,\epsilon}\dots M_{q+1,\eta_{q}^N,\epsilon}) \leq (1-\rho^{-8})^{p-q}.
$$
First consider the expression
$$
2\|G_{n,\epsilon}\|\sum_{p=0}^nb_{p,n,\epsilon}^2\sum_{q=0}^{p-1}\|v_q\|[\beta(M_{p,\eta_{p-1}^N,\epsilon}\dots M_{q+1,\eta_{q}^N,\epsilon}) + \beta(S_{p,n,\epsilon})]
$$
which is upper-bounded by
$$
2\overline{\alpha}(\epsilon)(\delta(\epsilon)\rho^4)^2\overline{v}\sum_{p=0}^n
\sum_{q=0}^{p-1}[(1-\rho^{-8})^{p-q}+(1-\delta(\epsilon)^{-1}\rho^{-8})^{n-p}]
$$
with $\overline{v}$ as in (A\ref{hyp:A}) (\ref{hyp:function}).
By standard manipulations, this is upper-bounded by
$$
C \overline{\alpha}(\epsilon)\delta(\epsilon)^2(n+1)
$$
for $C<\infty$ that does not depend upon $n$ or $\epsilon$.
Second the expression
$$
\sum_{p=0}^n b_{p,n,\epsilon}^2 4\|G_{n,\epsilon}\|\sum_{q=0}^{p-1}\|v_q\|b_{q,n,\epsilon}^2 \beta(S_{p,q,\epsilon})
$$
which is upper-bounded by
$$
4\overline{\alpha}(\epsilon)(\delta(\epsilon)\rho^4)^4\overline{v}\sum_{p=0}^n\sum_{q=p}^n
(1-\delta(\epsilon)^{-1}\rho^{-8})^{q-p}.
$$
Again, by standard manipulations one can upper-bound this latter expression by
$$
C \overline{\alpha}(\epsilon)\delta(\epsilon)^4(n+1)
$$
for $C<\infty$ that does not depend upon $n$ or $\epsilon$;
we can now conclude.
\end{proof}

\section{ABC Error}\label{app:abc_error}

Below we will repeatedly apply Theorem 2 of \cite{jasra}. This can also be established under (A\ref{hyp:A}) for smoothed ABC; the proof is omitted and follows the description in \cite{jasra}.
Recall that the ABC approximation of the joint smoothing density is:
$$
\hat{\eta}_{n,\epsilon}(x_{0:n}) = \frac{[\prod_{i=0}^n \int_{\mathbb{R}^{d_y}}\phi(\frac{u-y_i}{\epsilon})g(x_i,u)du] \eta_0(x_0)\prod_{i=1}^n f(x_{i-1},x_i)}{\int_{\mathbb{R}^{(n+1)d_x}}[\prod_{i=0}^n \int_{\mathbb{R}^{d_y}}\phi(\frac{u-y_i}{\epsilon})g(x_i,u)du] \eta_0(x_0)\prod_{i=1}^n f(x_{i-1},x_i) dx_{0:n}}.
$$
It is stressed that the analysis here is performed by integrating the auxiliary data, whilst the SMC analysis works on the joint space of auxiliary data and hidden state.

\begin{proof}[Proof of Theorem \ref{theo:abc_error}]
We have
$$
|\mathbb{E}[\mathcal{V}_n(X_{0:n})|y_{0:n}]-\mathbb{E}^{\epsilon}[\mathcal{V}_n(X_{0:n})|y_{0:n}]|
= |\sum_{p=0}^n \int v_p(x_p)[\hat{\eta}_n(x_{p:n})-\hat{\eta}_{n,\epsilon}(x_{p:n})]dx_{p:n}|
$$
where $\hat{\eta}_{p,\epsilon}(x_{p:n})$ and $\hat{\eta}_p(x_{p:n})$ are the ABC and true smoothers. Using the backward representation of the smoothers, one has the decomposition of the R.H.S.:
\begin{equation}
|\sum_{p=0}^n \int v_p(x_p)[\hat{\eta}_n(x_{p:n})-\hat{\eta}_{n,\epsilon}(x_{p:n})]dx_{p:n}|
=
|\sum_{p=0}^n \hat{\eta}_n M_{n:p+1,\hat{\eta}_{n-1:p}}(v_p) - \hat{\eta}_n^{\epsilon} M_{n:p+1,\hat{\eta}_{n-1:p,\epsilon},\epsilon}(v_p)|\label{eq:first_bias_decomp}
\end{equation}
where
\begin{eqnarray*}
\hat{\eta}_n M_{n:p+1,\hat{\eta}_{n-1:p}}(v_p) & = & \int \hat{\eta}_n(x_n) \prod_{q=p+1}^{n} M_{q,\hat{\eta}_{q-1}}(x_q,dx_{q-1}) v_p(x_p) dx_n\\
\hat{\eta}_{n,\epsilon} M_{n:p+1,\hat{\eta}_{n-1:p,\epsilon},\epsilon}(v_p) & = & \int \hat{\eta}_{n,\epsilon}(x_n) \prod_{q=p+1}^{n} M_{q,\hat{\eta}_{q-1,\epsilon},\epsilon}(x_q,dx_{q-1}) v_p(x_p) dx_n
\end{eqnarray*}
with the backward kernels:
\begin{eqnarray}
M_{q,\hat{\eta}_{q-1}}(x_q,dx_{q-1}) & = & \frac{\hat{\eta}_{q-1}(x_{q-1})f(x_{q-1},x_q)}{\hat{\eta}_{q-1}(f(\cdot,x_q))}dx_{q-1}\label{eq:backward_true}\\
M_{q,\hat{\eta}_{q-1,\epsilon},\epsilon}(x_q,dx_{q-1}) & = & \frac{\hat{\eta}_{q-1,\epsilon}(x_{q-1})f(x_{q-1},x_q)}
{\hat{\eta}_{q-1,\epsilon}(f(\cdot,x_q))}dx_{q-1}
\label{eq:backward_abc}.
\end{eqnarray}
We will drop the $\eta$ subscripts for the remainder of the proof.

One can now adopt a telescoping sum decomposition for each summand of the R.H.S.~of \eqref{eq:first_bias_decomp}:
$$
\sum_{s=1}^{n-p} \bigg(\hat{\eta}_n M_{n:n-s,\epsilon}[M_{n-s+1}-M_{n-s+1,\epsilon}](M_{n-s:p+1}(v_p))\bigg) + [\hat{\eta}_n -\hat{\eta}_{n,\epsilon}](M_{n:p+1,\epsilon}(v_p))
$$
For the second term, one can use Theorem 2 of \cite{jasra}. Thus concentrating on the summands in the first term we have
\begin{eqnarray*}
 \hat{\eta}_n M_{n:n-s,\epsilon}[M_{n-s+1}-M_{n-s+1,\epsilon}](M_{n-s:p+1}(v_p))
&\leq & \|\hat{\eta}_n M_{n:n-s,\epsilon}[M_{n-s+1}-M_{n-s+1,\epsilon}]\|_{TV}\times\\ & & \big(\prod_{q=n-s}^{p+1} \beta(M_{q})\big) \textrm{Osc}(v_p)
\end{eqnarray*}
via (A\ref{hyp:A}) (\ref{hyp:density_control}), (\ref{hyp:function}) and Lemma \ref{prop:backward_control} it clearly follows that there exist a $C<\infty$ and $\xi\in(0,1)$ which do not depend upon $n$, $y_{0:n}$ $\epsilon$ such that
$$
 \hat{\eta}_n M_{n:n-s,\epsilon}[M_{n-s+1}-M_{n-s+1,\epsilon}](M_{n-s:p+1}(v_p))
 \leq C \epsilon \xi^{p+s+2-n}.
$$
As a result, we have
$$
|\mathbb{E}[\mathcal{V}_n(X_{0:n})|y_{0:n}]-\mathbb{E}^{\epsilon}[\mathcal{V}_n(X_{0:n})|y_{0:n}]|
\leq C\sum_{p=0}^n\bigg[\big(\sum_{s=1}^{n-p}\xi^{p+s+2-n}\big) + 1\bigg]\epsilon
$$
for $C<\infty$ that does not depend upon $n$, $y_{0:n}$ $\epsilon$. Elementary manipulations allow us to conclude. 
\end{proof}

\subsection{Technical Result}

\begin{lem}\label{prop:backward_control}
Assume (A\ref{hyp:A}). Then there exist a $C<+\infty$ such that for any $k\in\{0,\dots,n-2\}$
$\epsilon>0$, $y_{0:n}$ and $\varphi\in\mathcal{B}_b(\mathbb{R}^{d_x})$ we have
$$
\sup_{x\in\mathbb{R}^{d_x}}|\int [M_{k+1,\hat{\eta}_{k,\epsilon},\epsilon}(x,dz) - M_{k+1,\hat{\eta}_k}(x,dz)]\varphi(z)| \leq C \epsilon
$$
where $M_{k+1,\hat{\eta}_{k,\epsilon},\epsilon}$ and $M_{k+1,\hat{\eta}_k}$ are defined in \eqref{eq:backward_true}-\eqref{eq:backward_abc} and $\hat{\eta}_{k,\epsilon}$ and $\hat{\eta}_k$ are the ABC and true filters.
\end{lem}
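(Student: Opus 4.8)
The plan is to view the time-$(k+1)$ backward kernel as a smooth functional of the filter at time $k$, and to reduce the stated bound to the (uniform in time) $\mathcal{O}(\epsilon)$ closeness of the true and ABC filters. Fix $x\in\mathbb{R}^{d_x}$ and $\varphi\in\mathcal{B}_b(\mathbb{R}^{d_x})$, and for a probability measure $\mu$ on $\mathbb{R}^{d_x}$ write $\Phi_x(\mu):=\mu(f(\cdot,x)\varphi)/\mu(f(\cdot,x))$, so that the quantity to be bounded is exactly $|\Phi_x(\hat{\eta}_{k,\epsilon})-\Phi_x(\hat{\eta}_k)|$. The first step is the usual add-and-subtract decomposition
\[
\Phi_x(\hat{\eta}_{k,\epsilon})-\Phi_x(\hat{\eta}_k) = \frac{\hat{\eta}_{k,\epsilon}(f(\cdot,x)\varphi)-\hat{\eta}_k(f(\cdot,x)\varphi)}{\hat{\eta}_{k,\epsilon}(f(\cdot,x))} + \frac{\hat{\eta}_k(f(\cdot,x)\varphi)}{\hat{\eta}_k(f(\cdot,x))}\cdot\frac{\hat{\eta}_k(f(\cdot,x))-\hat{\eta}_{k,\epsilon}(f(\cdot,x))}{\hat{\eta}_{k,\epsilon}(f(\cdot,x))}.
\]

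Next I would bound the elementary pieces using (A\ref{hyp:A}) (\ref{hyp:density_control}): since $f(z,x)\ge\rho^{-1}$ for all $z$ and $\hat{\eta}_{k,\epsilon}$ is a probability measure, $\hat{\eta}_{k,\epsilon}(f(\cdot,x))\ge\rho^{-1}$, uniformly in $x$; and $\hat{\eta}_k(f(\cdot,x)\varphi)/\hat{\eta}_k(f(\cdot,x))$, being a weighted average of $\varphi$, is bounded by $\|\varphi\|$. Hence
\[
|\Phi_x(\hat{\eta}_{k,\epsilon})-\Phi_x(\hat{\eta}_k)| \le \rho\,|\hat{\eta}_{k,\epsilon}(f(\cdot,x)\varphi)-\hat{\eta}_k(f(\cdot,x)\varphi)| + \rho\|\varphi\|\,|\hat{\eta}_{k,\epsilon}(f(\cdot,x))-\hat{\eta}_k(f(\cdot,x))|,
\]
which reduces the problem to bounding $|\hat{\eta}_{k,\epsilon}(h)-\hat{\eta}_k(h)|$ for the two test functions $h=f(\cdot,x)$ and $h=f(\cdot,x)\varphi$, whose oscillations are controlled uniformly in $x$ by $\rho$ and by $2\rho\|\varphi\|$ respectively, again by (A\ref{hyp:A}) (\ref{hyp:density_control}). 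At this point I would invoke the ABC filter error estimate — Theorem 2 of \cite{jasra}, which as noted at the start of this appendix holds under (A\ref{hyp:A}) — in the form $|\hat{\eta}_{k,\epsilon}(h)-\hat{\eta}_k(h)|\le C\epsilon\,\textrm{Osc}(h)$ with $C<\infty$ independent of $k$, $n$, $\epsilon$ and $y_{0:n}$. Combining, one gets the claim with a constant depending only on $\rho$ (and scaling linearly in $\|\varphi\|$, which is finite by hypothesis); the restriction $k\le n-2$ is used only to ensure the kernels in the statement are well-defined.

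The only substantive ingredient is the uniform-in-time closeness of the two filters, and this is precisely where the strong-mixing content of (A\ref{hyp:A}) (\ref{hyp:density_control}) enters through the cited Theorem 2 of \cite{jasra}: the true and ABC filtering recursions differ only in their potentials, $g(\cdot,y_k)$ versus $\int\phi((u-y_k)/\epsilon)g(\cdot,u)\,du$, which are $\mathcal{O}(\epsilon)$-close by the Lipschitz bound in (A\ref{hyp:A}) together with $\int|y|\phi(y)\,dy<\infty$, and the geometric forgetting induced by the two-sided bounds on $f$ keeps these per-step discrepancies from accumulating in $n$. Since that filter bound is available off the shelf here, the remainder of the argument is the elementary algebra above; the main obstacle would only arise if one insisted on a self-contained proof, in which case one would have to re-derive the uniform filter estimate by an induction propagating the one-step potential perturbation through the recursion while contracting via the Dobrushin coefficient.
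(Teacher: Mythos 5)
Your proposal is correct and follows essentially the same route as the paper: the same add-and-subtract decomposition of the ratio $\mu(f(\cdot,x)\varphi)/\mu(f(\cdot,x))$ into a numerator-difference term and a denominator-difference term, with the denominators bounded below by $\rho^{-1}$ via (A1)(1) and the filter differences controlled uniformly in $k$ by Theorem 2 of \cite{jasra}. Your closing remarks on why that filter estimate holds under (A1) are consistent with the paper's own (uncited-in-detail) justification, so nothing is missing.
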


\begin{proof}
We have the decomposition:
$$
\int [M_{k+1,\hat{\eta}_{k,\epsilon},\epsilon}(x,dz) - M_{k+1,\hat{\eta}_k}(x,dz)]\varphi(z)
= 
\int \varphi(z)f(z,x)\bigg[\frac{\hat{\eta}_{k,\epsilon}(z) - \hat{\eta}_k(z)}{\int \hat{\eta}_{k,\epsilon}(u)f(u,x) du} 
$$
$$
+ \hat{\eta}_k(z)\bigg\{\frac{\int[\hat{\eta}_k(u) - \hat{\eta}_{k,\epsilon}(u)]f(u,x)du}
{\int \hat{\eta}_{k,\epsilon}(u)f(x,u)du\int \hat{\eta}_k(u)f(u,x) du}\bigg\}\bigg] dz
$$
where we have suppressed the data from the notation.

Dealing with the first part, we have for some $C$ that does not depend upon $x,k,\epsilon$ or $y_{0:n}$
$$
\int \varphi(z)f(z,x)\bigg[\frac{\hat{\eta}_{k,\epsilon}(z) - \hat{\eta}_k(z)}{\int \hat{\eta}_{k,\epsilon}(u)f(u,x) du}\bigg]dz \leq  C \epsilon \|\varphi\|
$$
where we have used (A\ref{hyp:A}) (\ref{hyp:density_control}) in the denominator and to control $f$ as well as Theorem 2 of \cite{jasra}.
Now, for the second part
$$
\int \varphi(z)f(z,x)
\hat{\eta}_k(z)\bigg\{\frac{\int[\hat{\eta}_k(u) - \hat{\eta}_{k,\epsilon}(u)]f(u,x)du}
{\int \hat{\eta}_{k,\epsilon}(u)f(x,u)du\int \hat{\eta}_k(u)f(u,x) du}\bigg\}\bigg] dz
 \leq \|\varphi\| C \epsilon 
$$
where, again (A\ref{hyp:A}) (\ref{hyp:density_control}) has been applied along with Theorem 2 of \cite{jasra} and $C$ does not depend upon $x,k,\epsilon$ or $y_{0:n}$. Using the uniformity in $x$ of the above bounds allows us to conclude.
\end{proof}

\end{document}